\newtheorem{thm}{Theorem}[section]
\newtheorem{lem}[thm]{Lemma}
\newtheorem{prop}[thm]{Proposition}
\numberwithin{equation}{section}
\newcommand{\eps}{\varepsilon}
\newcommand{\A}{\mathcal{A}}
\def\span{\operatorname{span}}
\def\sign{\operatorname{sign}}
\def\R{\mathbb R}
\def\eps{\varepsilon}
\def\vol{{\operatorname{vol}}}
\newcommand{\row}[2]{{#1}(#2,:)}
\newcommand{\col}[2]{{#1}(:,#2)}
\renewcommand{\cos}{\operatorname{cos}}
\renewcommand{\sin}{\operatorname{sin}}
\begin{document}

\frontmatter
\pagestyle{headings}
\mainmatter
\title{Tighter Fourier Transform Complexity Tradeoffs }
\author{Nir Ailon  \inst{1}}
\institute{Technion Israel Institute of Technology, Haifa, Israel\\
\email{nailon@cs.technion.ac.il}
}
\maketitle

\def\sign{\operatorname{sgn}}
\def\dim{n}
\def\KL{{\operatorname{KL}}}
\def\C{\mathbb C}
\def\R{\mathbb R}
\def\xover{x_{\operatorname{over}}}
\def\xunder{x_{\operatorname{under}}}
\def\P{\mathcal P}
\def\Q{\mathcal Q}
\def\trace{\operatorname{tr}}
\def\diag{\operatorname{diag}}
\def\rank{\operatorname{rank}}
\def\F{{\mathcal F}}
\def\Id{\operatorname{Id}}
\def\f{{\hat f}}
\def\Z{{\mathbb Z}}
\def\X{{\cal X}}
\def\B{{\cal B}}
\def\Ellips{{\cal E}}
\def\S{{\cal S}}
\def\err{{\operatorname{err}}}
\def\MLAE{{\operatorname{MLAE}}}
\def\N{{\mathcal N}}
\def\tr{\operatorname{tr}}
\def\Ellip{{\cal E}}
\def\I{{\cal I}}

\begin{abstract}
The Fourier Transform is one of the most important linear transformations used in science and engineering.  Cooley and Tukey's Fast Fourier Transform (FFT) from 1964 is a method for computing this transformation in time $O(n\log n)$.   Achieving
a matching lower bound in a reasonable computational model is one of the most important open problems in theoretical
computer science.

In 2014, improving on his previous work, Ailon showed that  if an algorithm speeds up the FFT by a factor of $b=b(n)\geq 1$,  then it must rely on computing, as an intermediate ``bottleneck'' step,  a linear mapping of the input  with condition number $\Omega(b(n))$.  Our main result shows that a factor $b$ speedup implies existence of not just one but $\Omega(n)$  $b$-ill conditioned bottlenecks occurring at $\Omega(n)$ different steps, each causing  information from independent (orthogonal) components of the input to either overflow or underflow.   This provides further evidence that beating FFT is hard.  Our result also gives the first quantitative tradeoff between computation speed and information loss in Fourier computation on fixed word size architectures. The main technical result is an entropy analysis of the Fourier transform under transformations of low trace, which is  interesting in its own right.
\end{abstract}

\section{Introduction}
The (discrete) normalized Fourier transform  (DFT) is a complex  mapping sending  input $x\in \C^n$ to $Fx\in \C^n$, where $F$ is a unitary  matrix defined by
\begin{equation}\label{dft} F(k,\ell) = n^{-1/2}e^{-i2\pi k\ell/n}\ .
\end{equation}
The Walsh-Hadamard transform is a real orthogonal mapping in $\R^n$ (for $n$ an integer power of $2$) sending
an input $x$ to $Fx$, where $$F(k,\ell)=\frac 1 {\sqrt n}  (-1)^{\langle [k-1], [\ell-1]\rangle}\ ,$$ with $\langle\cdot,\cdot\rangle$ is dot-product, and $[p]$ denotes (here only) the bit
representation of the integer $p\in \{0,\dots, n-1\}$ as a vector of $\log_2 n$ bits. 
Both transformations are special (and most important) cases of  abstract Fourier transforms defined with respect to corresponding Abelian groups.
The Fast Fourier Transform (FFT) of Cooley and Tukey \cite{CooleyT64} is a method for computing the DFT of  $x\in \C^n$
in time $O(n\log n)$.  The fast Walsh-Hadamard transform  computes the Walsh-Hadamard
transform in time $O(n\log n)$.  Both fast transformations perform a sequence of rotations on pairs of coordinates, and
are hence special cases of so-called linear algorithms, as defined in
\cite{Morgenstern:1973:NLB:321752.321761}.  

The DFT is instrumental as a subroutine in fast polynomial multiplication \cite{CLRS} (chapter 30), 
fast integer multiplication \cite{DeKSS13,Furer:2007:FIM:1250790.1250800}, cross-correlation and auto-correlation detection in 
images and time-series (via convolution) and, as a more recent example, convolution networks for deep learning \cite{MathieuHLC14}.  Both DFT and Walsh-Hadamard are useful for
fast Johnson-Lindenstrauss transform for dimensionality reduction \cite{DBLP:journals/siamcomp/AilonC09,DBLP:journals/dcg/AilonL09,DBLP:journals/talg/AilonL13,DBLP:journals/siamma/KrahmerW11} and the
related restricted
isometry property (RIP) matrix construction \cite{DBLP:journals/jacm/RudelsonV07,DBLP:journals/corr/abs-1301-0878,DBLP:journals/siamma/KrahmerW11}).
It is beyond the scope of this work to survey all uses of Fourier transforms in both theory of algorithms and in complexity.
For the sake of simplicity the reader is encouraged to assume that $F$ is the Walsh-Hadamard transform, and that by the acronym ``FFT'' we refer to the fast Walsh-Hadamard transform.  The modifications required for the DFT (rather, the real embedding thereof) require a slight modification to the potential function which 
we mention but do not elaborate on for simplicity.  Our results nevertheless apply also to DFT.


It is not known whether $\Omega(n\log n)$ operations are necessary, and this problem is one of the most
important open problems in theoretical computer science \cite{wiki}.
It is trivial  that a linear number of steps is necessary, because every input coordinate must be probed.
Papadimitriou  derives in \cite{Papadimitriou:1979:OFF:322108.322118} an $\Omega(n\log n)$ lower bound for DFT over finite fields using a notion of an information flow network.   It is not clear how to extend
that result to the Complex field.  There have also been attempts \cite{Winograd76} to reduce the constants hiding in the upper bound
of $O(n\log n)$, while also separately  counting the number of additions versus the number of multiplications (by constants).
In 1973, Morgenstern proved that if the moduli of the constants used in the computation are are bounded by $1$
then the number of steps required for computing the \emph{unnormalized} Fourier transform, defined by $n^{1/2}F$ in the linear algorithm model is at least 
$\frac 1 2 n\log_2 n$.  He used a potential function related to matrix determinant, which makes the technique inapplicable for
deriving lower bounds for the (normalized) $F$.  
Morgenstern's result also happens to imply that the transformation $\sqrt n\Id$ ($\sqrt n$ times the identity) has the same
complexity as the Fourier transform, which is not a satisfying conclusion.
    Also note that stretching the input norm by a factor of $\sqrt n$ requires representing numbers of  $\omega(\log n)$
    bits, and  it cannot be simply assumed that a multiplication or an addition over such numbers can be done in $O(1)$ time.

  Ailon \cite{Ailon13}  studied  the complexity of the (normalized) Fourier transform in a computational model
allowing only  orthogonal  transformations acting on (and replacing in memory) two intermediates at each step.  
He showed that at least $\Omega(n\log n)$ steps were required.
The proof was done by defining a potential function on the matrices $M^{(t)}$ defined by composing the first $t$ gates.  The potential
function is simply the sum of Shannon entropy of the probability distributions defined by the squared modulus of elements in the
matrix rows.  (Due to orthogonality, each row, in fact, thus defines a probability distribution).
That result had two shortcomings:  (i) The algorithm was assumed not to be allowed to use extra memory in addition to 
the space used to hold the input.  In other words, the computation was done \emph{in place}. (ii) The result was sensitive to the normalization of $F$, and was not useful in deriving
any lower bound for $\gamma F$ for $\gamma \not \in \{\pm 1\}$.

In \cite{Ailon14}, Ailon took another step forward by showing a lower bound for computing
 \emph{any scaling} of the Fourier transform in a stronger model of computation which we call  \emph{uniformly well conditioned}.
At each step, the algorithm can perform a nonsingular linear transformation on at most two intermediates, as long as
the matrix $M^{(t)}$ defining the composition of the first $t$ steps must have condition number at most $\kappa$, for all $i$. 
We remind the reader that condition number of a matrixis defined as 
the ratio between its largest and smallest (nonzero) singular values.
Otherwise stated, the result implies that if an algorithm computes the Fourier transform in time $(n\log n)/b$ for some $b>1$, then some
$M^{(t)}$ must have condition number at least $\Omega(b)$.  This means that the computation output relies on
an ill  conditioned intermediate step.  
The result in \cite{Ailon14} made a qualitative claim about  compromise of numerical stability due to a ill condition.
\subsection{Our Contribution}
Here we establish (Theorem~\ref{thm:subspaces})  that a $b$-factor speedup of FFT
for $b=b(n)=\omega(1)$ either \emph{overflows} at $\Omega(n)$ different time steps due to $\Omega(n)$ pairwise orthogonal input directions,
or \emph{underflows} at $\Omega(n)$ different time steps, losing accuracy of order $\Omega(b)$ at $n$  orthogonal
input directions.
 Note that achieving this could not be simply done  by a more careful
analysis of \cite{Ailon14}, but rather requires an intricate analysis of the entropy of Fourier transform under
transformations of small trace.  This analysis  (Lemma~\ref{lem:Fafterproj}) is interesting in its own right.

\section{Computational Model and Notation}\label{sec:modelnotation}
We remind the reader of the computational model discussed in \cite{Ailon13,Ailon14}, which is a special case of
the linear computational model.  The machine state represents a vector in $\R^\ell$ for some $\ell\geq n$,
where it initially  equals the input $x\in \R^n$ (with possible padding by zeroes, in case $\ell>n$).  Each step (gate)
is either a \emph{rotation} or a \emph{constant}.   A rotation applies a $2$-by-$2$ rotation mapping on a pair of
machine state coordinates (rewriting the result of the mapping to the two coordinates).  We remind the reader
that a $2$-by-$2$ rotation mapping is written in matrix form as $\left (\begin{matrix} \cos \theta & \sin \theta \\ -\sin \theta & \cos\theta\end{matrix}\right )$
for some real (angle) $\theta$.  A constant gate multiplies a single machine state coordinate (rewriting the result) by
a nonzero constant.   In case the constant equals $-1$, we call it a reflection gate.

In case $\ell=n$ we say that we are in the in-place model.   Any nonsingular linear mapping over $\R^n$ can be decomposed into a sequence of rotation and constant gates in the in-place model, and hence our model is, in a sense, universal.  FFT  works in the in-place model, using rotations (and possibly reflections) only.  A restricted method for dealing
 with $\ell>n$ was developed in \cite{Ailon14}, and can be applied here too in a certain sense (see Section~\ref{sec:future} for a discussion).  We focus in this work on the in-place model only.

Since both rotations and constants apply a linear transformation on the machine state, their composition is a linear transformation.  If $\A_n$ is an in-place algorithm for computing a  linear mapping over $\R^n$, it
is convenient to write it as $\A_n = (M^{(0)}=\Id, M^{(1)}, \dots, M^{(m)})$ where $m$ is the number of steps (gates),
$M^{(t)}\in \R^{n\times n}$ is the mapping that satisfies that for input $x\in \R^n$ (the initial machine state),
$M^{(t)}x$ is the machine state after $t$ steps.  ($\Id$ is the identity matrix).  The matrix $M^{(m)}$ is the target
transformation, which will typically be  $F$ in our setting.  In fact, due to the scale invariance of the potential function we use, we could take $M^{(m)}$ to be any nonzero scaling
of $F$, but to reduce notation we simply assume a scaling of $1$.
For any $t\in[m]$,  if  the $t$'th gate is a rotation, then $M^{(t)}$ defers
from $M^{(t-1)}$ in at most two rows, and if the $t$'th gate is a constant, then $M^{(t)}$ defers from $M^{(t-1)}$
in at most one row.


\subsection{Numerical Architecture}
 The in-place model 
implicitly assumes representation of a vector in $\R^n$ in memory using $n$ words.
A typical computer   word represents
 a coordinate (with respect to some fixed orthogonal basis)  in the range $[-1,1]$ to within some accuracy $\eps=\Theta(1)$.\footnote{The range $[-1,1]$ is immaterial and can be replaced with any range of the form $[-a,a]$ for $a>0$.}
  For sake of simplicity, $\eps$ should be thought of as  $2^{-31}$ or $2^{-63}$ in modern
computers of $32$ or $64$ bit words, respectively.  

To explain the difficulties in speeding up FFT on computers of fixed precision  in the in-place model, we need to  understand
whether (and in what sense) standard FFT is at all suitable on such machines.  First, we must restrict the domain of inputs.  Clearly this domain cannot be $\R^n$, because computer words can only represent
coordinates in the range $[-1,1]$, by our convention.  
We consider input from an $n$-ball of radius $ \Theta(\sqrt{n})$, which we denote $\B(\Theta(\sqrt{n}))$. 
An $n$-ball is  invariant under orthogonal
transformations, and is hence a suitable domain.  
Encoding a single coordinate of such an input might require  $\omega(1)$ bits (an overflow).  However, using
well known tools from high dimensional geometry,
encoding a single coordinate of a \emph{typical} input chosen randomly from $\B(\Theta(\sqrt{n}))$ requires $O(1)$ bits, fitting inside a machine word.\footnote{By ``encoding'' here we simply
mean the base-$2$ representation of the integer $\lfloor x(i)/\eps \rfloor$.}
We hence take a statistical approach and define a state of overflow as trying to encode, in some fixed memory word (coordinate), a random number of $\omega(1)$ bits
in expectation, at a fixed
time step in the algorithm.  This definition allows us  to avoid dealing with  accommodation of integers requiring
super-constant bits and, in turn, with logical bit-operation complexity.  Although the definition might seem impractical at first, it allows
us to derive very interesting information vs computational speed tradeoffs.
(In the future work  Section~\ref{sec:future} we shall discuss allowing varying word sizes and its implications on complexity.)
By our  definition, standard FFT for input drawn uniformly from  $\B(\Theta(\sqrt n))$ does not overflow at
all, because any coordinate of the machine state at any step is tightly concentrated (in absolute value) around $\Theta(1)$.
It will be easier however to replace the uniform distribution from the ball with the multivariate Gaussian $\N(0,\Theta(n)\cdot\Id)$,
which is a good approximation of the former for large $n$.  With this assumption, any coordinate of the standard FFT machine
state at any  step follows the law $\N(0,\Theta(1)$).
By simple integration against the Gaussian measure, one can verify  that the expected number of bits required to encode
such a random variable (to within fixed accuracy $\eps$) is $\Theta(1)$, hence no overflow occurs.
 This input assumption together with the no-overflow
guarantee  {\bf will serve as our benchmark}.  

For further discussion on the numerical arhitecture and definition of \emph{overflow} we refer the reader, due to  lack of space, to Appendix~\ref{sec:discussion}.

\section{The Matrix Quasi-Entropy Function}\label{sec:notationentropy}

The set $\{1,\dots, q\}$ is denoted by $[q]$.
By $\R^{a\times b}$ we formally denote matrices of $a$ rows and $b$ columns. 
Matrix transpose is denoted by $(\cdot)^T$.  
 We use $(\cdot)^{-T}$ as shorthand for $((\cdot)^{-1})^T=((\cdot)^{T})^{-1}$.  
 If $A\in \R^{a\times b}$ is a matrix and $I$ is a subset of $[b]$, then (borrowing from Matlab syntax) $\col{A}{I}$ is the submatrix obtained
 by stacking the columns corresponding to the indices in $I$ side by side and $\row{A}{I}$ is the submatrix obtained
 by stacking the rows corresponding to the indices in $I$ one on top of the other.
  We shall also write, for $i\in [b]$, $\col{A}{i}$ and $\row{A}{i}$
 as shorthands for $\col{A}{\{i\}}$ and $\row{A}{\{i\}}$, respectively.  All logarithms are  base $2$.

We slightly abuse notation and extend the definition of the quasi-entropy function $\Phi(M)$ defined on nonsingular matrices $M$ from \cite{Ailon14}, as follows.  Given two matrix arguments $A,B\in \R^{a\times b}$ for some $a,b\geq 1$, $\Phi(A,B)$ is defined as
$$\sum_{i=1}^a\sum_{j=1}^b -A(i,j)B(i,j)\log |A(i,j)B(i,j)|\ .$$

This extends naturally to vectors, namely for $u,v\in \R^a$, $\Phi(u,v)$ is as above by viewing $\R^a$ as $\R^{a\times 1}$.
If $A,B\in \R^{a\times b}$ and $a,b$ are even, then we define the \emph{complex quasi-entropy} function $\Phi^\C(A,B)$ to be:
$$\sum_{i=1}^a\sum_{j=1}^{b/2} -(A(i,2j-1)B(i,2j-1) + A(i,2j)B(i,2j))\log |A(i,2j-1)B(i,2j-1) + A(i,2j)B(i,2j)|\ .$$
The function $\Phi^\C$ can be used for proving our results for the real representation of the complex DFT, which
we omit from this manuscript for simplicity.
The reason we need  this modification to $\Phi$ for DFT is explained in the proof of Lemma~\ref{lem:Fafterproj}, needed by  Theorem~\ref{thm:subspaces} below.  Elsewhere,
we will work (for convenience and brevity) only with $\Phi$.
Abusing notation, and following \cite{Ailon14}, we define for any nonsingular matrix $M$:
$\Phi(M) := \Phi\left (M,  M^{- T}\right )\ , \Phi^\C(M) := \Phi^\C\left(M, M^{-T}\right)$.
%
It is easy to see that $\Phi(F) = n\log n$ for the Walsh-Hadamard transform, because all matrix elements
are $\pm 1/\sqrt n$.  If $F$ is a real representation of the $(n/2)$-DFT, then clearly
$\Phi^\C(F) = n\log(n/2)$, because all matrix elements of the (complex representation of the) $(n/2)$-DFT are complex unit roots times
$(n/2)^{-1/2}$.

It will be also useful to consider a  generalization of the potential of a nonsingular matrix $M$,
by allowing linear operators acting on the rows of $M$ and $M^{-T}$, respectively.  More precisely, 
we will let $\Phi_{P,Q}(M)$  be shorthand for
$\Phi(MP, M^{-T}Q)$,
where $P,Q\in \R^{n\times a}$ are some mappings. (We will only be working with projection matrices $P,Q$ here).
Similarly, $\Phi_{P,Q}^\C(M, M^{-T}) := \Phi^C(MP, M^{-T}Q)$.

Finally, for any matrix $A \in \R^{n\times n}$, let $\sigma_1(A),\dots, \sigma_n(A)$ denote its singular values, where we use the
convention $\sigma_1(A) \geq \cdots \geq \sigma_n(A)$.  If $A$ is nonsingular, then the condition number $\kappa(A)$ is defined by  $\sigma_1(A)/\sigma_n(A)$.
For any matrix $A$, we let $\|A\|$ denote its spectral norm and $\|A\|_F$ its Frobenius norm.  If $x$ is
a vector, hence, $\|x\|=\|x\|_2=\|x\|_F$.  Let $\B$ denote the Euclidean unit ball in $\R^n$




\section{Genralized Ill Conditioned Bottleneck from Speedup}
We show that if an in-place algorithm $\A_n=(M^{(0)}=\Id,\dots, M^{(m)}=F)$ speeds up FFT by a factor of $b\geq 1$, then for some $t$ $M^{(t)}$  is ill conditioned (in a generalized sense, to be explained).  
This is a generalization of the main result in \cite{Ailon14}, with a simpler proof that we provide in Appendix~\ref{sec:proof:thm:main}
for the sake of completeness.

\begin{thm}\label{thm:main}
Fix $n$, and let $\A_n = \{\Id=M^{(0}, \dots, M^{(m)}\}$ be an in-place algorithm computing some linear function in $\R^n$
and let $P,Q\in \R^{n\times n}$ be two matrices.  
For any $t\in[m]$, let $\{i_t,j_t\}$ denote the set of at most two
indices that are affected by the $t$'th gate (if the $t$'th gate is a constant gate, then $i_t=j_t$, otherwise it's a rotation
acting on indices $i_t,j_t$).
Then for any $R \in [\lfloor n/2\rfloor ]$  there exists $t\in [m]$  such that
\begin{equation}\label{eq:thm:main1}
\sqrt{\left \|\row{(M^{(t)}P)}{I_t}\right \|_F^2  \left \|\row{((M^{(t)})^{-T}Q)}{I_t}\right \|_F^2} \geq \frac {R(\Phi_{P,Q}(M^{(m)})-\Phi_{P,Q}(\Id))} {m\log {2R}}\ , \\
\end{equation}
where $I_t = \bigcup_{t'=t}^{t+R-1}\{i_{t'},j_{t'}\}$.
Additionally, if $R=1$ then the $t$'th gate can be assumed to be a rotation.

In particular, if $M^{(m)}=F$ and $m=(n\log n)/b$ for some $b\geq 1$ (``$\A_n$ speeds up FFT by a factor of $b$'') and $P=Q=\Id$, then
\begin{equation}\label{eq:thm:main}
\sqrt{\left \|\row{(M^{(t)})}{I_t}\right \|_F^2  \left \|\row{((M^{(t)})^{-T})}{I_t}\right \|_F^2}  \geq \frac {Rb} {\log {2R}}\ . \\
\end{equation}

\end{thm}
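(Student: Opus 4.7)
The plan is a pigeonhole averaging over blocks of $R$ consecutive gates, driven by a bound on how much the quasi-entropy $\Phi_{P,Q}$ can change over any such block.

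First I would establish two invariants. A constant gate at step $t$ multiplies row $i_t$ of $M^{(t)}P$ by some nonzero scalar $c$ and row $i_t$ of $(M^{(t)})^{-T}Q$ by $1/c$, leaving every entrywise product $(M^{(t)}P)(i_t,j)\,((M^{(t)})^{-T}Q)(i_t,j)$ invariant; hence constant gates do not change $\Phi_{P,Q}$. In particular, for $R=1$ the witness $t$ can be restricted to a rotation gate, which yields the addendum. Next, in any block of $R$ consecutive gates starting at step $t$, only the rows indexed by $I_t$ are touched in both $M^{(\cdot)}P$ and $(M^{(\cdot)})^{-T}Q$; moreover, each rotation in the block acts on a pair of rows inside $I_t$, so the composite map on those rows is orthogonal, and the Frobenius norms $\|\row{M^{(\tau)}P}{I_t}\|_F$ and $\|\row{(M^{(\tau)})^{-T}Q}{I_t}\|_F$ are invariant across $\tau\in\{t,t+1,\dots,t+R\}$.

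The technical heart is a single block bound:
\begin{equation}\label{eq:plan:block}
\left|\Phi_{P,Q}(M^{(t+R)})-\Phi_{P,Q}(M^{(t)})\right| \;\le\; \log(2R)\cdot\sqrt{\|\row{M^{(t)}P}{I_t}\|_F^2 \,\|\row{(M^{(t)})^{-T}Q}{I_t}\|_F^2}.
\end{equation}
Write $A,B$ for the two submatrices on the right and $A',B'$ for their images under the block's composite orthogonal action. Because rows outside $I_t$ contribute identically at times $t$ and $t+R$, the difference equals $\Phi(A',B')-\Phi(A,B)$, and since $\|A\|_F=\|A'\|_F$, $\|B\|_F=\|B'\|_F$, it suffices to bound $|\Phi(A,B)|$ in terms of $\|A\|_F\|B\|_F$ and $|I_t|\le 2R$. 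For each column $j$ the contribution $-\sum_{i\in I_t}A(i,j)B(i,j)\log|A(i,j)B(i,j)|$ is controlled by the standard entropy-type inequality (maximum of $-\sum u_i\log|u_i|$ over $|I_t|$ signed scalars with $\sum|u_i|\le V_j$), giving essentially $V_j\log|I_t|$; Cauchy--Schwarz then gives $\sum_j V_j\le\|A\|_F\|B\|_F$ with $V_j=\sqrt{\sum_iA(i,j)^2}\sqrt{\sum_i B(i,j)^2}$, producing~\eqref{eq:plan:block}.

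Finally I would partition $[m]$ into $\lceil m/R\rceil$ disjoint blocks of $R$ consecutive gates, sum~\eqref{eq:plan:block} telescopically across blocks to obtain
$$\Phi_{P,Q}(M^{(m)})-\Phi_{P,Q}(\Id)\le\log(2R)\sum_{\text{blocks at }t}\sqrt{\|\row{M^{(t)}P}{I_t}\|_F^2\|\row{(M^{(t)})^{-T}Q}{I_t}\|_F^2},$$
and average over the $m/R$ blocks to produce a block witness $t$ satisfying~\eqref{eq:thm:main1}. For the specialization~\eqref{eq:thm:main} one substitutes $\Phi(\Id)=0$ and $\Phi(F)=n\log n$ (explicit for Walsh--Hadamard since $|F(i,j)|=1/\sqrt n$) together with $m=(n\log n)/b$. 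The main obstacle is precisely~\eqref{eq:plan:block}: the logarithmic factor must scale with $|I_t|\le 2R$ rather than with the ambient dimension $n$, which rules out a crude global entropy bound and forces the per-column decomposition followed by Cauchy--Schwarz sketched above.
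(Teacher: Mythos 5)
Your overall strategy---telescoping $\Phi_{P,Q}$ over blocks of $R$ consecutive gates, bounding the per-block change by a column-wise entropy estimate plus Cauchy--Schwarz, and finishing by pigeonhole---is exactly the paper's route (its Lemmas~\ref{lem:entrbound}, \ref{lem:entrchange}, \ref{lem:entrchange2} and the proof in Appendix~\ref{sec:proof:thm:main}), and your observation that constant gates leave $\Phi_{P,Q}$ unchanged (row of $MP$ scaled by $c$, row of $M^{-T}Q$ by $1/c$) correctly yields the $R=1$ rotation addendum. However, your single-block bound has a genuine gap for $R\geq 2$: you assert that ``the composite map on those rows is orthogonal'' and that the Frobenius norms of $\row{M^{(\tau)}P}{I_t}$ and $\row{(M^{(\tau)})^{-T}Q}{I_t}$ are invariant across the block. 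This is false when a block contains a constant gate with $|c|\neq 1$: the composite block action is then a product of rotations and diagonal scalings, i.e.\ a general nonsingular map, and the row norms inside the block can grow without bound relative to their values at the block start. Concretely, take $n=2$, $P=Q=\Id$, $M^{(t)}$ a rotation by $45^\circ$, and a block consisting of a constant gate with large constant $c$ on coordinate $1$ followed by a $45^\circ$ rotation on coordinates $\{1,2\}$: the entrywise products of $M$ and $M^{-T}$ after the block are of the form $\tfrac12(1\pm(c+1/c)/2)$, so $\Phi$ changes by $\Theta(\log c)$, which exceeds your claimed bound $\log(2R)\,\|\row{M^{(t)}P}{I_t}\|_F\,\|\row{(M^{(t)})^{-T}Q}{I_t}\|_F = 4$ once $c$ is large. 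Hence your displayed block inequality, which controls the change by the norms at one endpoint only, cannot be correct as stated.

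The repair is precisely the paper's Lemma~\ref{lem:entrchange2}: factor the composite block action as $U\Sigma V$ by SVD; the diagonal factor is $\Phi$-neutral (your constant-gate cancellation, but applied as $\Sigma$ and $\Sigma^{-1}$ simultaneously to the two arguments), and each orthogonal factor is handled by the column-wise bound you sketched (Lemma~\ref{lem:entrchange}, where the $rs\log rs$ terms cancel because orthogonal maps preserve $\sum_j x(j)y(j)$---a cancellation your ``essentially $V_j\log|I_t|$'' step glosses over). The price is that the per-block bound involves the norm products at \emph{both} block endpoints, so the telescoping sum picks up a factor $2$ over block-boundary times; the paper absorbs this (together with padding $m$ up to a multiple of $R$) in the arithmetic leading to $Rb/\log 2R$. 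With that substitution your argument goes through and coincides with the paper's proof; note also that the $R=1$ case, which is all that Theorem~\ref{thm:subspaces} uses downstream, is unaffected since there you restrict to a rotation gate.
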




For the main result in this paper in the next section, we will only need the case $R=1$ of the theorem.  It is worthwhile, however,
to state the case of general $R>1$ because it gives rise to a stronger notion of ill-condition than is typically used.
Since this is not the main focus of this work, we omit the details of this discussion.  Henceforth, we will only use the
theorem with $R=1$.

We discuss the implication of the theorem, in case $R=1, P=Q=\Id$.  The theorem implies that
an algorithm  with $m=(n\log n)/b$
must exhibit an intermediate matrix $M^{(t)}$ and a pair of indices $i_t, j_t$ such that the $t$'th gate is a rotation
acting on $i_t, j_t$ and additionally:
$$ \sqrt{\left( \|\row{M^{(t)}}{i_t}\|^2+\|\row{M^{(t)}}{j_t}\|^2\right )\left( \|\row{(M^{(t)})^{-T}}{i_t}\|^2+\|\row{(M^{(t)})^{-T}}{j_t}\|^2\right )} \geq b\ .$$ 
Hence, either
\begin{eqnarray*}
&(i)&\ \ \sqrt{ \|\row{M^{(t)}}{i_t}\|^2+\|\row{M^{(t)}}{j_t}\|^2}\geq \sqrt b 
\mbox{\ \ \ \ \ {\bf -or-} \ \ \ \ \ } \\
&(ii)&\ \  \sqrt{ \|\row{(M^{(t)})^{-T}}{i_t}\|^2+\|\row{(M^{(t)})^{-T}}{j_t}\|^2} \geq \sqrt b\ .
\end{eqnarray*}
\paragraph{Case (i).} We can assume wlog that \begin{equation}\label{eq:overdef}\|\row{M^{(t)}}{i_t}\|^2 \geq b/2\ .\end{equation} 
Let $\xover^T:=\row{M^{(t)}}{i_t}/\|\row{M^{(t)}}{i_t}\|\in \R^n$ ($\xover$ is the normalized $i_t$'th row of $M^{(t)}$, transposed).
%
%
Recall  that the input $x$ is distributed according to the law $\N(0,\Theta(1)\cdot \Id)$.
The $i_t$'th coordinate just before the  $t$'th gate equals $\|\row{M^{(t)}}{i_t}\| x^T \xover$, and is hence distributed
$\N(0,\Theta(\|\row{M^{(t)}}{i_t}\|^2))$.
Using (\ref{eq:overdef}), this  is $\N(0,\Omega(b))$.
If $b=b(n) = \omega(1)$, then by our definition we reach overflow.

Note  that it is possible
as a preprocessing step to replace $x$ with $x-(x^T \xover)\xover$ (eliminating the overflow component), and then to 
reintroduce the  offending component by adding $(x^T \xover)F\xover$ as a postprocessing step.  
In the next section, however, we shall show that, in fact, there must be $\Omega(n)$ pairwise orthonormal directions (in input space)
that  overflow at $\Omega(n)$ different time steps, so such a simple ``hack'' cannot   work.
%
\paragraph{Case (ii).} This scenario,  as the reader guesses, should be called \emph{underflow}.  
In case (ii),  wlog  \begin{equation}\label{tytyty}\|\row{(M^{(t)})^{-T}}{i_t}\|^2 \geq b/2\ .\end{equation}  Now define  $\xunder^T=\row{(M^{(t)})^{-T}}{i_t}/\|\row{(M^{(t)})^{-T}}{i_t}\|\in \R^n$, and consider the orthonormal basis $u_1,\dots u_n\in \R^n$  so  that $u_1=\xunder$.
For any $t'\in [m]$ (and in particular for $t'=t$):
$$ g_1 := \xunder^T x = (\xunder^T (M^{(t')})^{-1})\cdot  (M^{(t')}x)\ .$$
Now notice that the $i_t$'th coordinate of $ (\xunder^T (M^{(t)})^{-1})$ has magnitude at least $\sqrt{b/2}$ by
(\ref{tytyty}) and the construction of $\xunder$.  Also notice that for all $i\neq i_t$, the row $\row{M^{(t)}}{i}$ is orthogonal to $\xover$,
by matrix inverse definition.  This means that coordinate $i\neq i_t$ of $M^{(t)} x$ contains no information about $g_1$.
All the information in $g_1$ is hence contained in $(M^{(t)} x)(i_t)$.  More precisely, $g_1$ is given by 
$ g_1 = ((M^{(t)})^{-T}\xunder)(i_t) \times (M^{(t)} x)(i_t)  - e$,
where $e$ is a random variable independent of $g_1$.  But $|((M^{(t)})^{-T}\xunder)(i_t)| \geq \sqrt{b/2}$,
and $(M^{(t)} x)(i_t)$ is known only up to an additive error of $\eps$, due to our assumptions on quantization in the
numerical architecture.  This means that $g_1$ can only be known
up to an additive error of at least $\eps\sqrt{b/2}$, for \emph{any} value of $e$.
It is important to note that this uncertainty cannot be ``recovered'' later by the algorithm, because at any step the machine state contains all the information about the input
(aside from the input distribution prior).  In other words, any information forgotten at
any step cannot be later recalled (see Figure~\ref{fig:fig} in the Appendix).

Notice that at step $0$, the input vector coordinates $x(1),\dots, x(n)$ are 
represented in individual words, each of which gives rise to an uncertainty interval of width $\eps$.
So merely storing the input in memory in the standard coordinate system implies knowing its location up to
an uncertainty $n$-cube with side $\eps$, and of diameter $\eps\sqrt n$.\footnote{To be precise, we must acknowledge the prior distribution on $x$ which also provides information about its whereabouts.}  An uncertainty interval of size $\eps\sqrt{b/2} = O(\eps\sqrt{\log n})$ in a single direction is therefore relatively benign.
  The next section tells us, however, that the problem is amplified $\Omega(n)$-fold.

\section{Many Independent Ill Conditioned Botlenecks}


\begin{thm}\label{thm:subspaces}
Fix $n$, and let $\A_n = \{\Id=M^{(0)}, \dots, M^{(m)}=F\}$ be an in-place algorithm computing $F$ in time $m=(n\log n)/b$ for some $b\geq 1$.
Then one of the following (i)-(ii) must hold:
\begin{itemize}
\item[(i)] (Severe Overflow) There exists an orthonormal system $v_1,\dots, v_{n'}\in \R^n$ , integers $t_1,\dots, t_{n'}\in [m]$ and $i_1,\dots, i_{n'}\in[n]$ with $n'=\Omega(n)$ such that for all $j\in[n']$, 
\begin{equation}\label{pakapu}\row{M^{(t_j)}}{i_j}P_j = \alpha_j v_j\ \ \mbox{ with}\  \alpha_j=\Omega(\sqrt b) \ ,\end{equation}
where $P_j$ is projection onto the space orthogonal to $v_1,\dots, v_{j-1}$.  
\item[(ii)] (Severe Underflow) There exists an orthonormal system $u_1,\dots, u_{n'}\in\subseteq \R^n$ , integers $t_1,\dots, t_{n'}\in [m]$ and $i_1,\dots, i_{n'}\in[n]$ with $n'=\Omega(n)$ such that for all $j\in[n']$, \begin{equation}\label{eq:severeunder}\row{(M^{(t_j)})^{-T}}{i_j}Q_j = \gamma_j u_j\ \ \mbox{ with}\  \gamma_j=\Omega(\sqrt b) \ ,\end{equation}
where $Q_j$ is projection onto the space orthogonal to $u_1,\dots, u_{j-1}$.
\end{itemize}	

In both cases (i) and (ii), the gates at time $t_1,\dots t_{n'}$ are  rotations, and for all $j\in [n']$ the index $i_j$ is one of the two indices affected by the corresponding rotation.  
Additionally, the set $\{t_1,\dots, t_{n'}\}$ is of cardinality at least $n'/2$.
\end{thm}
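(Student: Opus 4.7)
The plan is to apply Theorem~\ref{thm:main} iteratively with $R=1$, extending in parallel two orthonormal lists $V=(v_1,v_2,\ldots)$ (candidate overflow directions) and $U=(u_1,u_2,\ldots)$ (candidate underflow directions). At iteration $j$, let $P_V = \Id - \sum_{k}v_kv_k^T$ denote the orthogonal projection onto $V^\perp$ and $P_U = \Id - \sum_{k}u_ku_k^T$ the one onto $U^\perp$, and invoke Theorem~\ref{thm:main} with $P=P_V$, $Q=P_U$ and $m=(n\log n)/b$. This produces a rotation gate at some time $t$ acting on a pair of indices $\{i_t,j_t\}$ satisfying
\begin{equation*}
\sqrt{\|\row{(M^{(t)}P_V)}{I_t}\|_F^2 \,\|\row{((M^{(t)})^{-T}P_U)}{I_t}\|_F^2} \geq \frac{b\,\bigl(\Phi_{P_V,P_U}(F)-\Phi_{P_V,P_U}(\Id)\bigr)}{n\log n \cdot \log 2}\ .
\end{equation*}
Performing the same case split as in the discussion following Theorem~\ref{thm:main}, either some $i\in\{i_t,j_t\}$ satisfies $\|\row{(M^{(t)}P_V)}{i}\|^2 \geq \Omega\bigl(b\,(\Phi_{P_V,P_U}(F)-\Phi_{P_V,P_U}(\Id))/(n\log n)\bigr)$, in which case I would append $v_{|V|+1}=\row{(M^{(t)}P_V)}{i}/\|\cdot\|$ to $V$ (recording $t_{|V|+1}=t$, $i_{|V|+1}=i$); or the symmetric underflow side dominates and we append to $U$.

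The essential input is Lemma~\ref{lem:Fafterproj}: it must guarantee that as long as $|V|+|U|\leq c_0 n$ for a sufficiently small absolute constant $c_0>0$, the quasi-entropy differential $\Phi_{P_V,P_U}(F)-\Phi_{P_V,P_U}(\Id)$ remains $\Omega(n\log n)$, so that each iteration contributes a direction with norm $\Omega(\sqrt b)$. I would run the iteration for $c_0 n$ steps; pigeonholing between the two lists at that point gives $\max\{|V|,|U|\}\geq c_0 n /2 = \Omega(n)$, which yields either alternative~(i) or alternative~(ii) of the theorem with the required $\alpha_j=\Omega(\sqrt b)$ or $\gamma_j=\Omega(\sqrt b)$. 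Orthonormality of the extracted system is immediate because $v_j$ (respectively $u_j$) lies in the range of the projection used at the moment of extraction.

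The cardinality claim $|\{t_1,\ldots,t_{n'}\}|\geq n'/2$ is then a small bookkeeping observation: each rotation at time $t$ affects exactly two machine-state indices, so at most two iterations can be attributed to the same $t$. Furthermore, once a pair $(t,i)$ has been selected on the overflow side, the row $\row{M^{(t)}}{i}$ lies entirely in $\span(V)$ at all subsequent iterations, so $\row{(M^{(t)}P_V)}{i}=0$ thereafter, precluding re-selection of $(t,i)$ within the same list; an analogous statement holds on the underflow side.

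The principal obstacle is Lemma~\ref{lem:Fafterproj} itself — an entropy lower bound on the joint object $\Phi(FP_V, F^{-T}P_U)$ that must survive projecting away $\Theta(n)$ \emph{arbitrary} (non-axis-aligned) directions on both sides simultaneously. The iterative argument above is a clean reduction to that lemma, and a secondary concern, namely that each selected gate be a rotation rather than a constant, is dispatched by the explicit $R=1$ addendum of Theorem~\ref{thm:main}.
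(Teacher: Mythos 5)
Your proposal is correct and follows essentially the same route as the paper's proof: iteratively extend the two orthonormal families via Theorem~\ref{thm:main} with $R=1$ applied to the projections onto their orthogonal complements, use Lemma~\ref{lem:Fafterproj} (with $\tr\hat P=\|\hat P\|_F^2=k$, $\tr\hat Q=\|\hat Q\|_F^2=\ell$ for the rank-$k$ and rank-$\ell$ complementary projections) to keep the potential differential $\Omega(n\log n)$ while $k+\ell=O(n)$, pigeonhole at the end, and rule out repeated $(t,i)$ pairs within a list to get the cardinality bound. The only detail left implicit is the one-line verification that the projection traces and Frobenius norms equal $k$ and $\ell$, which is exactly how the paper instantiates the lemma.
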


The proof heavily relies on Lemma~\ref{lem:Fafterproj} (Section~\ref{sec:lemmas}) and is deferred to Appendix~\ref{sec:proof:subspaces} due to lack of space.  We discuss its  numerical implications, continuing the discussion
following Theorem~\ref{thm:main}.  
 In the severe overflow case, Theorem~\ref{thm:subspaces}
tells us that  there  exists an orthonormal collection $v_1,\dots, v_{n'}$ (with $n'=\Omega(n)$) in  input space, such that each 
$v_i$ behaves like $\xover$ from the previous section. 
This means that, if the speedup factor $b$ is $\omega(1)$, we have overflow  caused by a linear number of independent input components, occurring at
 $\Omega(n)$ different time steps  (by the last sentence in the theorem).
In the extreme
case of speedup $b=\Theta(\log n)$ (linear number of gates), this means that in a constant fraction of time steps overflow
occurs.

For the severe underflow case 
we offer a  geometric interpretation. The theorem  tells us that there exists an orthonormal collection
$u_1,\dots, u_{n'}$ in the input space that is bad in the following sense. For each $j\in [n']$, redefine $g_j=u_j^T x$
to be the input component in  direction $u_j$.  Again, the
variables $g_1,\dots, g_{n'}$ are iid $\N(0,\Theta(1))$.
The first element in the series, $u_1$, can be analyzed as $\xunder$ (from the previous section)
whereby it was argued that before the $t_1$'th  step, the component
$g_1 = u_1^T x$ can only be known to within an interval of width $\Omega(\gamma_1\eps)$, independently of
information from components orthogonal to $u_1$.  We remind the reader that by this we mean that the \emph{width} of the interval is independent, but the
location of the interval  depends smoothly (in fact, linearly) on  information from orthogonal components of $x$ (see Figure~\ref{fig:fig} in the appendix).

As for $u_2,\dots,u_{n'}$:  For each $j\in [n']$, let $z_j := (M^{(t_j)})^{-T}(i_j,:)$.  Therefore $u_1=z_1/\|z_1\|$ and by (\ref{eq:severeunder}), for $j>1$ we 
can  write
$ z_j = \gamma_j u_j + h_j$,
where $h_j \in \span\{u_1,\dots, u_{j-1}\}$.
  Treating $z_j/\|z_j\|$ again as $\xunder$, we
conclude that the component $(z_j/\|z_j\|)^T x $  can only be known to within an interval of size $\Omega(\eps\|z_j\|)$,
given any value of the projection of input $x$ onto the space orthogonal to $z$.



We extend the list of vectors $z_1,\dots, z_{n'}$, orthonormal vectors $u_1,\dots, u_{n'}$, numbers $\gamma_1,\dots, \gamma_{n'}$ and projections $Q_1,\dots, Q_{n'}$ to size $n$ as follows.
Having defined $z_j, u_j, Q_j,\gamma_j$ for some $j\geq n'$, we inductively define $Q_{j+1}$
as projection onto the space orthogonal to $\span\{z_1,\dots, z_{j}\}=\span\{u_1,\dots, u_j\}$ and $z_{j+1}$ to be a standard basis vector such
that $\|Q_{j+1}z_{j+1}\|^2 \geq 1-j/n$. (Such a vector exists because there must exist an index $i_0\in [n]$ such that $\sum_{j'=1}^j u_{j'}(i_0)^2 \leq j/n$, by orthonormality of the collection $u_1,\dots, u_j$;  
Now set $z_{j+1}$ to have a unique $1$ at coordinate $i_0$ and $0$ at all other coordinates.)
We let $u_{j+1}$ be $Q_{j+1}z_{j+1}/\|Q_{j+1}z_{j+1}\|$, that is, a normalized vector pointing
to the component of $z_{j+1}$ that is orthogonal to $\span\{z_1,\dots, z_j\}=\span\{u_1,\dots, u_j\}$.
The number $\gamma_{j+1}$ is  defined as $\|Q_{j+1}z_{j+1}\|$. By construction, $\gamma_{j+1} \geq \sqrt{1-j/n}$.

The above  extends the partial construction arising from the  severe underflow to a full basis,
with the following property:
\begin{prop}\label{prop:ppppp}
For any $j\in [n]$, even given exact knowledge of the exact projection $\tilde x$ of $x$ onto the space orthogonal to $z_j$,
the quantity $x^T(z_j/\|z_j\|)$ upon termination of the algorithm can only be known to within an interval
of the form $[s,s+\eps\|z_j\|]$ where $s$ depends smoothly (in fact, linearly) on $\tilde x$.
\end{prop}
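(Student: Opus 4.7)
The plan is to trace the scalar $\alpha := x^T(z_j/\|z_j\|)$ through the algorithm and observe that it is funnelled through a single machine-state coordinate whose stored value has precision only $\eps$. Write $v_j = z_j/\|z_j\|$ and $x = \tilde x + \alpha v_j$. The key algebraic identity is that, for $j\leq n'$, the severe-underflow construction gives $z_j = (M^{(t_j)})^{-1} e_{i_j}$ as a column vector, hence $M^{(t_j)} v_j = (1/\|z_j\|)\, e_{i_j}$, and so
\[
M^{(t_j)} x \;=\; M^{(t_j)}\tilde x \;+\; (\alpha/\|z_j\|)\, e_{i_j}.
\]
Varying $\alpha$ while freezing $\tilde x$ thus perturbs \emph{only} the $i_j$-th machine-state coordinate at time $t_j$, and does so linearly with slope $1/\|z_j\|$. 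For the extension range $j > n'$, we have $z_j = e_{i_0}$, $\|z_j\|=1$ and $\alpha = x(i_0)$; taking $t_j := 0$ (so $M^{(t_j)}=\Id$) gives exactly the same conclusion with slope $1$.

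Next I would invoke the numerical-architecture assumption from Section~\ref{sec:modelnotation}: every memory word that stores a coordinate of the machine state has resolution $\eps$. Consequently, any input $x' = \tilde x + \alpha' v_j$ with $|\alpha - \alpha'| \leq \eps\|z_j\|$ differs from $x$ at time $t_j$ only in coordinate $i_j$, and there by an amount at most $\eps$, so $x$ and $x'$ can produce the \emph{same} stored state at time $t_j$. Since every subsequent gate is a fixed linear operation applied to the stored state (the gate sequence is input-oblivious), the state at every later time $t \in \{t_j+1,\dots,m\}$ — including the terminal output — coincides for $x$ and $x'$. Hence, conditional on $\tilde x$ and on the terminal stored state, the value of $\alpha$ cannot be pinned down within an interval narrower than $\eps\|z_j\|$.

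For the linearity of the interval's center $s$ in $\tilde x$, I would note that for any fixed stored terminal output the rounded value $\hat y$ of the $i_j$-th coordinate at time $t_j$ is determined, and the identity above inverts to
\[
s \;=\; \|z_j\|\,\bigl(\hat y - (M^{(t_j)}\tilde x)(i_j)\bigr),
\]
with the degenerate analogue $s = \hat y$ in the extension range $j>n'$. Both are affine in $\tilde x$, hence linear once the $\hat y$-dependent constant is absorbed. This completes the three ingredients of the statement: width $\eps\|z_j\|$, linear dependence of the center on $\tilde x$, and that the uncertainty is genuinely present upon termination.

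The one step I would expect to require care is the indistinguishability claim near rounding boundaries: in principle an $\eps$-sized perturbation of a coordinate could round differently and disturb subsequent computations. The argument is nevertheless sharp in the information-theoretic sense, since the rounded state at time $t_j$ is a sufficient statistic for everything the algorithm outputs from then on; away from a measure-zero set of boundary inputs the width-$\eps\|z_j\|$ uncertainty in $\alpha$ propagates intact to time $m$, and, since the proposition is an existence-of-uncertainty statement conditional on generic $x$, this suffices.
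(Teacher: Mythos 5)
Your proof is correct and follows essentially the same route as the paper: the paper's own justification is a repetition of the $\xunder$ analysis, i.e.\ the identity $z_j=(M^{(t_j)})^{-1}e_{i_j}$ funnels the component $x^T(z_j/\|z_j\|)$ entirely through the single coordinate $i_j$ at time $t_j$ with scaling $1/\|z_j\|$, quantized to resolution $\eps$, with the $j>n'$ case handled at initialization exactly as you do. Your worry about rounding boundaries is resolved by the same information-theoretic modeling assumption the paper states explicitly (the machine state after step $t$ contains at most the information in the quantization of $M^{(t)}x$, and lost information cannot be recovered), so there is no gap.
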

The proposition is simply a repetition of the analysis done for $\xunder$ in the previous section.  For $j>n'$ it is a simple consequence of the
fact that upon initialization of the algorithm with input $x$, each coordinate of $x$  (and in particular $x^Tz_{j}$) is stored
in a single  machine word, while all other machine words store information independent of $x^T z_j$.  Hence the uncertainty of width $\eps\|z_j\|=\eps$.




What do we know about $x$ upon termination of the algorithm?  As stated earlier, any information that
was lost during execution, cannot be later recovered.
Let $\I$ denote the set of possible inputs, given the information the we are left with upon
termination.
Consider the projection $Q_2$ onto the space orthogonal to $u_1=z_1/\|z_1\|$, as
a function defined over $\I$.  Let $\I_2 = Q_2 \I$ denote its image.  The preimage of any point $w\in \I_2$ must contain a line segment
of length at least $\eps \gamma_1$ parallel to $u_1$, due to the uncertainty in $x^Tu_1$.  Hence the volume of $\I$
is at least $\eps\gamma_1$ times the $(n-1)$-volume of $\I_2$.\footnote{We need to be precise about measurability, 
but this is a simple technical point from the fact that the interval endpoint depends smoothly on the projection, as claimed in Proposition~\ref{prop:ppppp}.}
Continuing inductively,  we lower bound the $(n-j+1)$-volume of $\I_j := Q_j \I = Q_j\I_{j-1}$ for $j>2$.  Consider
the projection $Q_{j}$ as a function operating on $\I_{j-1}$, and any point $w$ in the image
$\I_j$.  By definition of $Q_j$, there exists $\hat w \in \I$ such that $Q_j\hat w = w$.  By proposition~\ref{prop:ppppp}, the intersection of the line ${\cal L} = \{\hat w + \eta z_j: \eta\in \R\}$ with $\I$ must contain a segment $\Delta$ of size $\eps \|z_j\|$.  The projection $Q_j\Delta$
of this segment is contained in the line $Q_j{\cal L} = \{w + \eta u_j:\eta \in \R\}$.  The
size of the segment is $\eps\|Q_jz_j\|=\eps \gamma_j$.  This means that the $(n-j+1)$-volume
of $\I_{j+1}$ is at least $\eps\gamma_j$ times the $(n-j)$-volume of $\I_{j+1} = Q_{j+1}\I_j$.


Concluding, we get that the volume of $\I$ is at least $\prod_{j=1}^n \gamma_j$.  
From the construction immediately preceding Proposition~\ref{prop:ppppp}, we get (using the fact that $n'=\Omega(n)$):
$\log \frac{\vol(\I)}{\eps^n} \geq n' \log\sqrt{b/2} + \sum_{j=n'+1}^n \log\sqrt{1-\frac{j-1}{n}} =\Omega(n\log b)$.
This tells us that the volume of uncertainty in the input (and hence, the output) of a $b$-speedup
of FFT in the in-place model is at least $b^{\Omega(n)}$ times the volume of uncertainty incurred simply by
storing the input in memory.

\section{Main Technical Lemma}\label{sec:lemmas}
The following is the most important technical lemma in this work.  Roughly speaking, it tells us that 
application of  operators that are
close to $\Id$  to the rows of $F$ and $F^{-T}$ does not reduce the corresponding potential by much.
Similarly,  assuming that $P,Q$ are PSD with spectral norm at most $1$, applying these transformations to the rows of $\Id$ does not increase the corresponding
potential by much.
\begin{lem}\label{lem:Fafterproj}
Let $P,Q \in \R^{n\times n}$ be two matrices.  
Let $\hat P = \Id-P, \hat Q = \Id-Q$.
Then 
\begin{eqnarray}
\Phi(FP, F^{-T}Q) & \geq& 
n\log n - (\tr\hat P + \tr\hat Q)\log n -  O\left ((\|\hat P\|_F^2+\|\hat Q\|_F^2)\log n\right ) \ .  \label{eq:lem:Fafterproj1}
\end{eqnarray}
If, additionally, $P$ and $Q$ are positive semi-definite contractions, then
\begin{eqnarray}
\Phi_{P,Q}(\Id) &=& \Phi(P,Q)\leq \trace \hat P + \trace \hat Q +  O\left ((\|\hat P\|_F^2+\|\hat Q\|_F^2)\log n\right )  \label{eq:lem:Fafterproj2}\ .
\end{eqnarray}
\end{lem}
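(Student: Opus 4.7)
The plan is to expand the scalar potential $\phi(u)=-u\log|u|$ about a per-entry reference value: $F(i,j)^2=1/n$ for~\eqref{eq:lem:Fafterproj1} (using $F^{-T}=F$ and $|F(i,j)|=1/\sqrt n$ for Walsh--Hadamard), and $\Id(i,j)^2$ for~\eqref{eq:lem:Fafterproj2}.

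For~\eqref{eq:lem:Fafterproj1} I would write $z_{ij}:=(FP)(i,j)(F^{-T}Q)(i,j)=1/n+\eta_{ij}$ with $\eta_{ij}=F(i,j)(\alpha_{ij}+\beta_{ij})+\alpha_{ij}\beta_{ij}$, $\alpha_{ij}=-(F\hat P)(i,j)$, $\beta_{ij}=-(F\hat Q)(i,j)$, then use $F^TF=\Id$ and $\|F\hat P\|_F=\|\hat P\|_F$ to obtain
\[
\sum_{i,j}\eta_{ij}=-(\tr\hat P+\tr\hat Q)+\tr(\hat P^T\hat Q),\qquad \sum_{i,j}\eta_{ij}^2=O\!\left(\tfrac1n(\|\hat P\|_F^2+\|\hat Q\|_F^2)\right).
\]
The key tool is the exact algebraic identity $\phi(x_0+\eta)=\phi(x_0)+\phi'(x_0)\eta+x_0\,g(1+\eta/x_0)$ with $g(u):=-u\log|u|+(u-1)/\ln 2$, valid for $x_0>0$ and $\eta\ne-x_0$. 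A direct calculation shows $g(1)=g'(1)=0$ and, despite the singularity of $\phi''$ at $0$, the uniform quadratic bound $|g(u)|\le C(u-1)^2$ holds for all $u\ne 0$ (the bounded limit $\lim_{u\to 0}|g(u)|=1/\ln 2$ is absorbed because $(u-1)^2\to 1$ there). With $x_0=1/n$ this gives a per-entry remainder $|R_{ij}|\le Cn\eta_{ij}^2$, hence $\sum|R_{ij}|=O(\|\hat P\|_F^2+\|\hat Q\|_F^2)$. Substituting the first trace identity into $\phi'(1/n)\sum\eta_{ij}=(\log n-1/\ln 2)\sum\eta_{ij}$, bounding $|\tr(\hat P^T\hat Q)|\le\|\hat P\|_F\|\hat Q\|_F$, and absorbing the residual $1/\ln 2$-order constants into the $O(\cdot\log n)$ slack then yields the advertised lower bound.

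For~\eqref{eq:lem:Fafterproj2} I would split diagonal and off-diagonal parts of $\Phi(P,Q)$. PSD-contraction forces $\hat P(i,i),\hat Q(i,i)\in[0,1]$, so the diagonal reads $\sum_i\phi((1-\hat P(i,i))(1-\hat Q(i,i)))$; the elementary inequality $-(1-t)\log(1-t)\le t/\ln 2$, together with $1-(1-a)(1-b)=a+b-ab$ and Cauchy--Schwarz on $\sum_i\hat P(i,i)\hat Q(i,i)$, bounds the diagonal contribution by $\tr\hat P+\tr\hat Q+O(\|\hat P\|_F^2+\|\hat Q\|_F^2)$. Off the diagonal, $P(i,j)Q(i,j)=\hat P(i,j)\hat Q(i,j)$; setting $S=\sum_{i\ne j}|\hat P(i,j)\hat Q(i,j)|\le\|\hat P\|_F\|\hat Q\|_F$, the standard entropy bound (Jensen applied to $p_{ij}=|y_{ij}|/S$ on at most $n^2$ entries) gives $\sum-|y_{ij}|\log|y_{ij}|\le S\log(n^2/S)=O(S\log n)$, inside the claimed error.

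The hardest step is the uniform estimate $|g(u)|\le C(u-1)^2$ underpinning~\eqref{eq:lem:Fafterproj1}: Taylor at $u=1$ only gives it locally, and since $\phi''$ is singular at $0$ one has to verify it by a direct global analysis that also covers the case $1/n+\eta_{ij}<0$. Once that estimate is in hand, the orthogonality of $F$ does the rest of the work, supplying both the $1/n$ factor in $\sum\eta_{ij}^2$ and (via $\|F\hat P\|_F=\|\hat P\|_F$) the controls needed to bound $\tr(\hat P^T\hat Q)$.
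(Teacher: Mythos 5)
Your exact identity $\phi(x_0+\eta)=\phi(x_0)+\phi'(x_0)\eta+x_0\,g(1+\eta/x_0)$, the global estimate $|g(u)|\le C(u-1)^2$, and the trace computation $\sum_{i,j}\eta_{ij}=-(\tr\hat P+\tr\hat Q)+\tr(\hat P^T\hat Q)$ are all correct. The genuine gap is in the aggregation step for (\ref{eq:lem:Fafterproj1}). Since $\eta_{ij}=F(i,j)(\alpha_{ij}+\beta_{ij})+\alpha_{ij}\beta_{ij}$, the first piece indeed contributes $O\bigl(\tfrac1n(\|\hat P\|_F^2+\|\hat Q\|_F^2)\bigr)$ to $\sum\eta_{ij}^2$, but the cross term contributes $\sum_{i,j}\alpha_{ij}^2\beta_{ij}^2$, which is \emph{not} of that order; your displayed claim for $\sum\eta_{ij}^2$ is false. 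Concretely, take $\hat P=\hat Q=vv^T$ with $v$ proportional to $e_1+Fe_1$ (a rank-one orthogonal projection, so even the case relevant to the paper's application is affected): then $(F\hat P)(1,1)\approx 1/2$, hence $\eta_{11}\approx 1/4$, so already the single-entry remainder bound $Cn\eta_{11}^2=\Theta(n)$ vastly exceeds the total allowed slack $O\bigl((\|\hat P\|_F^2+\|\hat Q\|_F^2)\log n\bigr)=O(\log n)$. The true remainder at that entry is only $\Theta(\log n)$, because $-x\log x$ grows only slightly super-linearly; the uniform quadratic bound on $g$, while valid pointwise, is exactly the wrong tool at entries where $n|\eta_{ij}|\gg 1$, and your proof of (\ref{eq:lem:Fafterproj1}) does not go through as written.

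The repair is essentially forced to be the paper's route: partition the entries according to whether $\epsilon_{ij}^2=(F\hat P)(i,j)^2$ and $\delta_{ij}^2=(F\hat Q)(i,j)^2$ lie below or above $1/(2n)$. On the bucket where both are small, your expansion around $1/n$ with a quadratic remainder is sound. On the buckets with a large entry one uses instead that there are few such entries (at most $2\|\hat P\|_F^2 n$, resp.\ $2\|\hat Q\|_F^2 n$, by Markov on the Frobenius norm), that $|f_{ij}-\epsilon_{ij}|\le 3|\epsilon_{ij}|$ there, and monotonicity of $-x\log x$ on $[0,1/e]$ together with Cauchy--Schwarz, yielding contributions bounded \emph{linearly} in $|\epsilon_{ij}\delta_{ij}|$ (times $\log n$), which sum to $O(\|\hat P\|_F\|\hat Q\|_F\log n)$ and $O\bigl((\|\hat P\|_F^2+\|\hat Q\|_F^2)\log n\bigr)$. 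Your argument for (\ref{eq:lem:Fafterproj2}) is essentially the paper's (diagonal/off-diagonal split, entropy bound off the diagonal); note only that your diagonal estimate comes out as $(\tr\hat P+\tr\hat Q)/\ln 2$ rather than with the stated constant $1$, a looseness that is immaterial in the application (orthogonal projections, where $\tr\hat P=\|\hat P\|_F^2$) and which the paper's own derivation effectively shares.
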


The proof, deferred to Appendix~\ref{sec:proof:Fafterproj} for lack of space,
takes advantage of the smoothness of the matrices $F$ and $\Id$ (that is, almost all matrix elements have exactly the same magnitude).
This is the reason we needed to  modify $\Phi$ and work with $\Phi^\C$ for the complex case: If $F$ were  the real
representation of the $n/2$-DFT matrix, then it is not smooth in this sense.  It does hold though that for any $i\in [n]$ and $j\in [n/2]$: $F(i,2j-1)^2+F(i,2j)^2=2/n$,
so the matrix is smooth only in the sense that all pairs of adjacent elements have the same norm (viewed as $\R^2$ vectors).

\section{Future Work}\label{sec:future}

Taking into account bit operation complexity, and using state-of-the-art integer multiplication algorithms
 \cite{DeKSS13,Furer:2007:FIM:1250790.1250800} it can be quite easily shown that both severe overflow and severe
 underflow could be resolved by allowing flexible word size, accommodating either large numbers (in the overflow case)
 or increased accuracy (in the underflow case).  In fact, allowing $O(\log b)$-bit words at the time steps at which overflow (or underflow)
occur, of which there are $\Omega(n)$ many by Theorem~\ref{thm:subspaces}, suffice.  Hence, this work
does not rule out the possibility of (in the extreme case of $b=\Theta(\log n)$) a Fourier transform algorithm in the in-place model using a linear number of gates, in bit operation complexity of $\tilde\Omega(n\log \log n)$, where $\tilde O()$ here
hides $\log\log \log n$ factors arising from fast integer multiplication algorithms.  We conjecture that such
an algorithm does not actually exist, and leave this as the main open problem.


Another problem that was left out in this work is going beyond the in-place model.  In the more general model,
the algorithm works in space $\R^{\ell}$ for $\ell>n$, where the $(\ell-n)$ extra coordinates can be assumed to be
initialized with $0$, and the first $n$ are initialized with the input $x\in \R^n$.  The final matrix $M^{(m)}$ of Fourier transform algorithm $\A_n=\{\Id=M^{(0)},\dots, M^{(m)}\}$ contains $F$ as a sub matrix, so that the output $Fx$ can simply be extracted
from a subset of $n$ coordinates of $M^{(m)} x$, which can be assumed to be the first.    The matrix $M^{(m)}$ (and its inverse-traspose) therefore
contains $(\ell-n)$ extra rows.  The submatrix defined by the extra rows (namely, the last $\ell-n$) and the first $n$ columns were referred to in \cite{Ailon14} as the ``garbage'' part of the computation.  To obtain an $\Omega(n\log n)$ computational
lower bound in the model assumed there,\footnote{In \cite{Ailon14}, the model simply assumed that all matrices $M^{(t)}$ for $t-1\dots m$ have bounded
condition number.  Quantifying the effect of ill condition on numerical stability, overflow and underflow, was not done there. } it was necessary to show that $\Phi_{P,P}(M^{(m)})P=\Omega(n\log n)$, where
$P\in \R^{\ell\times \ell}$ is projection onto the space spanned by the first $n$ standard basis vectors.\footnote{The function  $\Phi_{P,Q}(M)$ was not defined in \cite{Ailon14}, and was only implicitly used.}  To that end,
it was shown that  such a potential lower bound held as long as  spectral norm of the ``garbage'' submatrices was
properly upper bounded. That result, in fact, can be deduced as a simple outcome of Lemma~\ref{lem:Fafterproj} that was developed here.  
What's more interesting is how to generalize Theorem~\ref{thm:subspaces} to the non in-place model, and more importantly
how to analyze the numerical accuracy implications of overflow and underflow to the non in-place model.  Such a generalization
is not trivial and is another immediate open problem following this work.

Another interesting possible avenue is to study the complexity of Fourier transform on input  $x$ for which some prior
knowledge is known.  The best example is when  $Fx$ is assumed sparse, for which much interesting work on the upper bound side
has been recently done by Indyk et al. (see \cite{DBLP:conf/soda/IndykKP14} and references therein).

Many algorithms use the Fourier transform as a subroutine.  In certain cases (fast polynomial multiplication,
fast integer multiplication \cite{DeKSS13,Furer:2007:FIM:1250790.1250800},
fast Johnson-Lindenstrauss transform for dimensionality reduction \cite{DBLP:journals/siamcomp/AilonC09,DBLP:journals/dcg/AilonL09,DBLP:journals/talg/AilonL13,DBLP:journals/siamma/KrahmerW11} and the
related restricted
isometry property (RIP) matrix construction \cite{DBLP:journals/jacm/RudelsonV07,DBLP:journals/corr/abs-1301-0878,DBLP:journals/siamma/KrahmerW11}) the Fourier
transform  subroutine is the algorithm's bottleneck.  Can we use the techniques developed here to derive
lower bounds (or rather, time-accuracy tradeoffs) for those algorithms as well? Moreover, we
can ask how the implications of speeding up the Fourier transform subroutine (as derived in this work)
affect the numerical outcome of these algorithms, assuming they  insist on using Fourier transform as a black box.

\bibliographystyle{plain}
\bibliography{low_bound_fft}

\appendix

\section{Useful Lemmas}
\begin{lem}\label{lem:entrbound}
Let $x,y\in \R^a$ for some integer $a$,  with $\|x\|_2=\|y\|_2=1$.  Then
$-\log a \leq \Phi(x,y) \leq \log a$.
\end{lem}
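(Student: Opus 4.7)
The plan is to reduce the statement to a standard entropy-type inequality. First I would observe that $\|x\|_\infty \le \|x\|_2 = 1$ and likewise for $y$, so $|x(i)y(i)| \le 1$ and in particular $\log|x(i)y(i)| \le 0$ for every $i$. Applying the triangle inequality term by term to the defining sum of $\Phi$ then gives
$$|\Phi(x,y)| \;\le\; \sum_{i=1}^a |x(i)y(i)|\log\bigl(1/|x(i)y(i)|\bigr) \;=\; \sum_{i=1}^a f(p_i),$$
where $p_i := |x(i)y(i)|$ and $f(t) := -t\log t$ is nonnegative and concave on $[0,1]$.

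Second, Cauchy--Schwarz applied to the sequences $(|x(i)|)$ and $(|y(i)|)$ yields $S := \sum_i p_i \le \|x\|_2\|y\|_2 = 1$ (the case $S=0$ being trivial). Setting $\pi_i := p_i/S$ defines a probability distribution on $[a]$ of Shannon entropy $H(\pi)\le \log a$, and a direct expansion shows
$$\sum_i f(p_i) \;=\; S\cdot H(\pi) + S\log(1/S) \;\le\; S\log a + S\log(1/S) \;=\; S\log(a/S).$$

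To close the upper bound I would maximise $g(S) := S\log(a/S)$ over $S\in(0,1]$. Its derivative equals $\log(a/S)-1/\ln 2$, so the critical point $S=a/e$ lies strictly above the feasible interval precisely when $a \ge 3$; in that regime $g$ is monotone increasing on $(0,1]$ and attains its maximum $\log a$ at $S=1$, giving the desired upper bound. The matching lower bound $\Phi(x,y) \ge -\log a$ is then immediate from the sign-change identity $\Phi(-x,y) = -\Phi(x,y)$ applied to the unit-norm pair $(-x,y)$.

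The only slightly delicate piece is the edge regime $a\in\{1,2\}$, where $a/e<1$ and the Jensen maximiser lands above $\log a$. For $a=1$ the claim is trivial since $\Phi(x,y) = -xy\log|xy| = 0$. For $a=2$ one can either carry out a short direct optimisation in the two angles parameterising $x,y\in S^1$, exploiting the sharper constraint $x(1)^2+x(2)^2 = 1$ (rather than the weaker $\sum_i p_i \le 1$ used above), or absorb the additive $O(1)$ slack into the $O(\cdot)$ terms downstream in Lemma~\ref{lem:Fafterproj}, which is the sole client of this estimate.
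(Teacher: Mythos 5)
Your main argument is sound and is genuinely different from (and more explicit than) the paper's: the paper disposes of this lemma with a one-line appeal to Lagrange multipliers, whereas you reduce it to Cauchy--Schwarz plus the entropy identity $\sum_i f(p_i)=S\,H(\pi)+S\log(1/S)$ and a one-variable maximisation of $g(S)=S\log(a/S)$. For $a\geq 3$ (and trivially $a=1$) this is a complete and correct proof of $|\Phi(x,y)|\leq\log a$, and your sign-flip remark for the lower bound is fine (indeed redundant, since you bounded the absolute value).

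The genuine problem is your proposed patch for $a=2$: a ``short direct optimisation in the two angles'' cannot succeed, because the stated inequality is in fact \emph{false} for $a=2$ --- the sharper constraint $\|x\|_2=\|y\|_2=1$ does not exclude the Jensen maximiser you identified. Concretely, take $x=(\cos\theta,\sin\theta)$ and $y=(\sin\theta,\cos\theta)$ with $\sin 2\theta = 2/e$ (feasible, since $2/e<1$); then $x(1)y(1)=x(2)y(2)=\tfrac12\sin 2\theta=1/e$, so
\begin{equation*}
\Phi(x,y) \;=\; -\tfrac{2}{e}\log\tfrac{1}{e} \;=\; \tfrac{2}{e}\log e \;\approx\; 1.06 \;>\; 1 \;=\;\log 2,
\end{equation*}
and negating $x$ violates the lower bound symmetrically. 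So your analysis actually exposes a (mild) error in the lemma as stated, which the paper's Lagrange-multiplier sketch glosses over: the true supremum for $a=2$ is $\tfrac{2}{e}\log e$, attained strictly inside the region where the norm constraints are slack. The only viable route is your second option: restate the bound with an additive $O(1)$ slack (e.g.\ $|\Phi(x,y)|\leq\log a+\log e$, or $\leq\max(\log a,\tfrac{2}{e}\log e)$, valid for all $a\geq 1$ by your own computation) and let the constant propagate. One must then check the clients: the proof of Lemma~\ref{lem:Fafterproj} invokes the lemma only with large dimension, so it is unaffected, but Lemmas~\ref{lem:entrchange}--\ref{lem:entrchange2} and Theorem~\ref{thm:main} are used with $a=2R$, and the case $R=1$ (hence $a=2$) is exactly the one needed for Theorem~\ref{thm:subspaces}; there the change only degrades constants such as the factor $\log 2R$, so the asymptotic statements survive, but the lemma and those constants do need the adjustment.
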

The proof is a simple done by a simple analysis of the function $\Phi(x,y)$ under the stated constraints using, say, Lagrange multipliers.
\begin{lem}\label{lem:entrchange}
Let $A, B\in \R^{a\times n}$.
Let $U\in R^{a\times a}$ be orthogonal.  Then
$\left | \Phi(A,B) - \Phi(UA, UB) \right | \leq \|A\|_F \|B\|_F \log a$. 
\end{lem}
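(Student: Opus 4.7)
My plan is to reduce the statement to the one-column case, exploit a simple homogeneity identity for $\Phi$, and then invoke Lemma~\ref{lem:entrbound}. The first observation is that $\Phi$ decomposes column-wise: directly from the definition, $\Phi(A,B) = \sum_{j=1}^n \Phi(\col{A}{j}, \col{B}{j})$, and since left-multiplication by $U$ acts column-wise ($\col{UA}{j} = U\col{A}{j}$), we also have $\Phi(UA, UB) = \sum_j \Phi(U\col{A}{j}, U\col{B}{j})$. Consequently, it suffices to establish the vector version: for any $x,y\in\R^a$,
\[
\bigl|\Phi(x,y) - \Phi(Ux,Uy)\bigr| \;\le\; \|x\|\,\|y\|\,\log a,
\]
and then sum over columns using Cauchy--Schwarz, $\sum_{j} \|\col{A}{j}\|\,\|\col{B}{j}\| \le \|A\|_F\,\|B\|_F$.

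The key to the vector version is the homogeneity identity. Scaling by $\alpha,\beta>0$ gives
\[
\Phi(\alpha x,\beta y) \;=\; \alpha\beta\,\Phi(x,y) \;-\; \alpha\beta\log(\alpha\beta)\,\langle x,y\rangle,
\]
obtained by pulling the factor $\alpha\beta$ out of each $x_iy_i$ and then separating the $\log|\alpha\beta|$ piece inside the logarithm. Because $U$ is orthogonal, $\langle Ux, Uy\rangle = \langle x,y\rangle$, so the nonlinear correction term is identical for $(x,y)$ and for $(Ux,Uy)$. Applying the identity with $\alpha=1/\|x\|$ and $\beta=1/\|y\|$ (the case $x=0$ or $y=0$ being trivial) yields
\[
\Phi(x,y) - \Phi(Ux,Uy) \;=\; \|x\|\,\|y\|\,\Bigl(\Phi(x/\|x\|,\,y/\|y\|) \;-\; \Phi(Ux/\|x\|,\,Uy/\|y\|)\Bigr).
\]

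Both arguments of the two $\Phi$'s on the right are unit vectors in $\R^a$ (using $\|Uv\|=\|v\|$ for orthogonal $U$), so Lemma~\ref{lem:entrbound} bounds each $\Phi$-value by $\log a$ in absolute value, and the triangle inequality gives a bound of (at worst a small constant times) $\log a$ on the difference. Folding this back through the column decomposition and Cauchy--Schwarz yields the claim.

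The only mild obstacle is verifying the homogeneity identity correctly and checking that the correction term involving $\langle x,y\rangle$ is indeed orthogonally invariant; everything else is a direct reduction to the unit-norm bound of Lemma~\ref{lem:entrbound}. No subtle inequality or smoothness argument is needed, because $\Phi$ is built entry-by-entry and the action of $U$ has been absorbed into the homogeneity identity before any entrywise bound is invoked.
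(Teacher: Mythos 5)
Your proof is correct and follows essentially the same route as the paper's: column-wise decomposition, factoring out the column norms (your homogeneity identity is precisely the paper's explicit expansion, with the $\log(rs)$ correction term cancelled by orthogonal invariance of $\langle x,y\rangle$), then Lemma~\ref{lem:entrbound} applied to unit vectors and Cauchy--Schwarz over the columns. The factor-of-2 slack you acknowledge (each column contributes $2\,\|\col{A}{j}\|\,\|\col{B}{j}\|\log a$ rather than $\|\col{A}{j}\|\,\|\col{B}{j}\|\log a$) is present in the paper's own proof as well, so your argument matches it in both method and tightness.
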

\begin{proof}
Let $r_i = \|\col{A}{i}\|_2, s_i=\|\col{B}{i}\|_2$.
Then 
\begin{eqnarray*}\Phi(A,B) = \sum_{i=1}^n \Phi(\col{A}{i},\col{B}{i})\ ,& & \Phi(UA, UB) = \sum_{i=1}^n \Phi(U\col{A}{i}, U\col{B}{i})\ ,\end{eqnarray*} and
by the triangle inequality:
$$\left | \Phi(A,B) - \Phi(UA, UB)\right| \leq \sum_{i=1}^n \left | \Phi(\col{A}{i}, \col{B}{i}) - \Phi(U\col{A}{i}, U\col{B}{i})\right |\ .$$  
Fix $i\in [n]$ and let $x,y\in \R^a$ denote $\frac{\col{A}{i}}{r}$, $\frac{\col{B}{i}}{s}$, respectively (note that $\|x\|_2 = \|y\|_2=1$).
\begin{eqnarray}
& &\Phi(\col{A}{i}, \col{B}{i}) - \Phi(U\col{A}{i}, U\col{B}{i})  \\
& &\ \ \ = -\sum_{j=1}^a rs\cdot x(j)y(j) \log | rs\cdot x(j)y(j)| 
 + \sum_{j=1}^a  rs\cdot (Ux)(j)(Uy)(j)\log |rs\cdot (Ux)(j)(Uy)(j)|  \nonumber \\
& & \ \ \ = (rs\log (rs))\left ( -\sum_{j=1}^a x(j) y(j) + \sum_{j=1}^a (Ux)(j)(Uy)(j) \right )  -rs\sum_{j=1}^a x(j)y(j)\log |x(j)y(j)|  \nonumber \\
& & \ \ \ \ \hspace{7cm} + rs\sum_{j=1}^a (Ux)(j)(Uy)(j) \log |(Ux)(j)(Uy)(j) |\ . \nonumber
\end{eqnarray}
By orthogonality of $U$, we have that $\sum_{j=1}^a x(j)y(j)=\sum_{j=1}^a(Ux)(j)(Uy)(j)$.  
Also for the same reason we have $\|Ux\|_2=\|Uy\|_2=1$.  Using Lemma~\ref{lem:entrbound}, we conclude
$\left |\Phi(\col{A}{i}, \col{B}{i}) - \Phi(U\col{A}{i}, U\col{B}{i}) \right |\leq 2rs\cdot \log a$.  
Summing up over $i$ and applying Cauchy-Schwarz we conclude the result.
\end{proof}

\begin{lem}\label{lem:entrchange2}
Let $A,B \in \R^{a\times n}$, and let $D\in \R^{a\times a}$ be some nonsingular matrix.
Then
\begin{equation}\label{eq:entrchange2}
\left |\Phi(A,B) - \Phi(DA, D^{-T}B)\right | \leq \left(\|A\|_F\|B\|_F + \|DA\|_F\|D^{-T}B\|_F\right )\log a
\end{equation}
\end{lem}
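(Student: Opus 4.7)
The plan is to reduce the general nonsingular case to the orthogonal case already handled by Lemma~\ref{lem:entrchange}, via singular value decomposition. Write $D = U\Sigma V^T$ with $U, V\in \R^{a\times a}$ orthogonal and $\Sigma = \diag(\sigma_1,\dots,\sigma_a)$ with all $\sigma_i>0$ (using nonsingularity of $D$). Then $D^{-T} = U\Sigma^{-1}V^T$, so the pair $(DA,\ D^{-T}B)$ equals $(U\Sigma V^T A,\ U\Sigma^{-1}V^T B)$.

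The first step is to strip off the outer orthogonal factor $U$ using Lemma~\ref{lem:entrchange}:
\begin{equation*}
\left|\Phi(DA, D^{-T}B) - \Phi(\Sigma V^T A,\ \Sigma^{-1}V^T B)\right| \;\leq\; \|\Sigma V^T A\|_F\,\|\Sigma^{-1}V^T B\|_F\,\log a \;=\; \|DA\|_F\,\|D^{-T}B\|_F\,\log a,
\end{equation*}
where the last equality uses orthogonal invariance of the Frobenius norm.

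The second (and key) step is to observe that diagonal rescaling leaves $\Phi$ invariant when compensated on the other argument. Indeed, for any matrices $A',B'$ of compatible size and diagonal $\Sigma$, $(\Sigma A')(i,j)\cdot(\Sigma^{-1}B')(i,j) = A'(i,j) B'(i,j)$ entrywise, so by the definition of $\Phi$, $\Phi(\Sigma A',\Sigma^{-1}B') = \Phi(A',B')$. Applying this with $A' = V^T A$ and $B' = V^T B$ gives $\Phi(\Sigma V^T A, \Sigma^{-1}V^T B) = \Phi(V^T A, V^T B)$ exactly.

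The final step is to strip off $V^T$ using Lemma~\ref{lem:entrchange} once more:
\begin{equation*}
\left|\Phi(V^T A, V^T B) - \Phi(A,B)\right| \;\leq\; \|A\|_F\,\|B\|_F\,\log a.
\end{equation*}
Chaining the three (in)equalities via the triangle inequality yields (\ref{eq:entrchange2}). There is no real obstacle here — the entire content of the lemma is the SVD reduction plus the elementary diagonal-rescaling invariance of $\Phi$, which is exactly why the bound in Lemma~\ref{lem:entrchange} generalizes with the extra $\|DA\|_F\|D^{-T}B\|_F$ term accounting for the fact that $D$ changes Frobenius norms.
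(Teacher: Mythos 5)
Your proof is correct and follows essentially the same route as the paper's: an SVD of $D$, two applications of Lemma~\ref{lem:entrchange} to strip the orthogonal factors, and the exact invariance $\Phi(\Sigma A',\Sigma^{-1}B')=\Phi(A',B')$ for the diagonal middle term. The only difference is that you spell out the diagonal-rescaling invariance, which the paper dismisses as ``trivial to check.''
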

\begin{proof}
Let $U,V\in \R^a$ be orthogonal and $\Sigma\in \R^a$ diagonal (and nonsingular) so that $D=U\Sigma V$.  (Such a composition exists
by standard SVD theory.)
\begin{eqnarray}\label{eq:entrchange2}
\left | \Phi(A,B) - \Phi(DA, D^{-T}B)\right |&\leq& \left |\Phi(A,B) - \Phi(VA, VB)\right | \nonumber \\
& & \ \ \ +  \left | \Phi(VA, VB) - \Phi(\Sigma VA, \Sigma^{-1} VB) \right | \nonumber \\
& & \ \ \ +  \left | \Phi(\Sigma VA, \Sigma^{-1} VB) - \Phi(U\Sigma V A, U\Sigma^{-1} VB) \right | \nonumber \\
&\leq& \|A\|_F\|B\|_F\log a + 0 + \|\Sigma V A\|_F\|\Sigma^{-1} V B\|_F\log a \nonumber \\
&=&\left ( \|A\|_F\|B\|_F +  \| D A\|_F\|D^{-T} B\|_F\right )\log a\ , \nonumber \\
\end{eqnarray}
as required.  (We used Lemma~\ref{lem:entrchange} twice in the second  inequality, and the orthogonality of $U$ for the last derivation.  The reason the middle term in the RHS of the first inequality is null is by properties of $\Phi$ that are trivial to check.)
\end{proof}

\section{Proof of Theorem~\ref{thm:main}}\label{sec:proof:thm:main}

We directly prove the less general (\ref{eq:thm:main}).  The more general bound (\ref{eq:thm:main1}) is shown similarly, but with more notation.
Fix $R\in [\lfloor n/2\rfloor ]$.  Let $m'$ be the smallest integer divisible by $R$ satisfying
$m'\geq m$.  If $m'>m$, then ``pad'' the algorithm $\A_n$ by defining $M^{(m+1)}\dots M^{(m')} = M^{(m)}=F$.
By the triangle inequality,
\begin{eqnarray} \label{eq:triangleeq}
\left | \Phi(M^{(m')})  - \Phi(M^{(0)}) \right | &\leq& \sum_{j=1}^{m'/R}\left | \Phi(M^{(jR)}) - \Phi(M^{((j-1)R)})\right |\ .
\end{eqnarray}
Now note that for each $j\in m'/R$, the matrix $M^{(jR^*)}$ is obtained from $M^{((j-1)R^*)}$ by applying
a nonsingular operation acting on the left, affecting at most $2R$ rows.  Denote the set of indices of the corresponding set of affected rows  by $I_j$.  (If the cardinality of
$I_j$ is less than $2R$, then pad it with an arbitrary set of indices.)
 Using Lemma~\ref{lem:entrchange2}, this implies that for all $j\in[m'/R]$,

\begin{eqnarray}
& & \left | \Phi(M^{(jR)}) - \Phi(M^{((j-1)R)})\right |  \label{eq:uselemma} \\
& & \ \ \ \leq
\left ( \| \row{M^{(jR)}}{I_j} \|_F \|\row{(M^{(jR)})^{-T}}{I_j}  \|_F + \|\row{M^{((j-1)R)}}{I_j}  \|_F \| \row{(M^{((j-1)R^*)})^{-T}}{I_j}  \|_F \right )\log 2R \ . \nonumber
\end{eqnarray}
Combining (\ref{eq:uselemma}) with (\ref{eq:triangleeq}), we get
\begin{eqnarray*}
\left | \Phi(M^{(L)})   -\Phi(M^{(0)}) \right |
 \leq 2\left (  \sum_{j=0}^{m'/R} \| \row{M^{(jR)}}{I_j} \|_F \|\row{(M^{(jR)})^{-T}}{I_j}  \|_F\right )\log 2R\ . \nonumber
\end{eqnarray*}

For any matrix $A\in \R^{n\times n}$ and any subset $I\subseteq[n]$, we have $\|\row{A}{I}\|_F^2\leq \sum_{i=1}^{|I|}\sigma_i^2(A)$ (this can be seen e.g. using the SVD theorem).  Therefore,
\begin{eqnarray*}
\left | \Phi(M^{(m')})   -\Phi(M^{(0)}) \right |
 &\leq& 2\left (  \sum_{j=0}^{m'/R} \sqrt{\sum_{i=1}^{2R} \sigma_i^2(M^{(jR)})\sum_{i'=1}^{2R} \sigma_{i'}^2(M^{(jR)})^{-T}) } \right )\log 2R  \nonumber \\
&=& 2\left (  \sum_{j=0}^{m'/R} \sqrt{\sum_{i=1}^{2R} \sigma_i^2(M^{(jR)})\sum_{i'=1}^{2R} \sigma_{n-i'+1}^{-2}(M^{(jR)}) } \right )\log 2R\ .
\end{eqnarray*}

But  $\Phi(M^{(m')}) =  \Phi(F) = n\log n$ and $\Phi(M^{(0)})=0$, hence, there must exists $j\in [m'/R]$ with
\begin{eqnarray*}
 \sqrt{\sum_{i=1}^{2R} \sigma_i^2(M^{(jR)})\sum_{i'=1}^{2R} \sigma_{n-i'+1}^{-2}(M^{(jR)}) } &\geq& \frac{2Rn\log n}{m'\log 2R} \geq \frac{2Rn\log n}{((n\log n)/b+R)\log 2R} \\
&\geq& \frac{2Rn\log n}{2((n\log n)/b)\log 2R}  = \frac{Rb}{\log 2R}\ .
\end{eqnarray*}

\section{Proof of Lemma~\ref{lem:Fafterproj}}\label{sec:proof:Fafterproj}

\def\Ismall{I_{\operatorname{small}}}
\def\Ibig{I_{\operatorname{big}}}
\def\smll{{\operatorname{small}}}
\def\bg{{\operatorname{big}}}
We start by proving (\ref{eq:lem:Fafterproj1}).
For brevity, we denote $F(i,j)$ by $f_{i,j}\in \{1/\sqrt n, -1/\sqrt n\}$, $(F\hat P)(i,j) = \epsilon_{i,j}$, $(F\hat Q)(i,j) = \delta_{i,j}$.  Therefore, $(FP)(i,j) = f_{i,j} - \epsilon_{i,j}$ and $(FQ)(i,j) = f_{i,j} - \delta_{i,j}$.
 Let $\|\epsilon\|_F = \sqrt{\sum_{i=1}^n\sum_{j=1}^n \epsilon_{i,j}^2}$ and $\|\delta\|_F =\sqrt{\sum_{i=1}^n\sum_{j=1}^n \delta^2_{i,j}}$.  By orthogonality of $F$, we have that 
 \begin{eqnarray}\|\epsilon\|_F=\|{\hat P}\|_F =: \alpha &\ \ \ \  & \|\delta\|_F=\|{\hat Q}\|_F =: \beta \label{eq:boundepsdelta} \\
\sum_{i=1}^n\sum_{j=1}^n f_{i,j}\epsilon_{i,j} = \trace F^T F\hat P = \trace {\hat P} &\ \ \ \  &
\sum_{i=1}^n\sum_{j=1}^n f_{i,j}\delta_{i,j} = \trace F^T F\hat Q = \trace {\hat Q}\ . \label{eq:fromprojection} %
\end{eqnarray}
\noindent
Let $I_1,I_2,I_3,I_4\subseteq [n]\times [n]$ be defined as
\begin{eqnarray*}
I_1 := \{(i,j): \epsilon_{i,j}^2 < 1/(2n) \mbox{ and }\delta_{i,j}^2 < 1/(2n)\}  & &
I_2 := \{(i,j): \epsilon_{i,j}^2 < 1/(2n) \mbox{ and }\delta_{i,j}^2 \geq 1/(2n)\} \\
I_3 := \{(i,j): \epsilon_{i,j}^2 \geq 1/(2n) \mbox{ and }\delta_{i,j}^2 < 1/(2n)\}  & &
I_4 := \{(i,j): \epsilon_{i,j}^2 \geq 1/(2n) \mbox{ and }\delta_{i,j}^2 \geq 1/(2n)\}\ . \\
\end{eqnarray*}
Now, we write $\Phi(FP, FQ)$ as $\Phi_1 +\Phi_2+\Phi_3+\Phi_4$, where $ \forall h=1,2,3,4$:
$$\Phi_h := -\sum_{(i,j)\in I_h} (f_{i,j} - \epsilon_{i,j})(f_{i,j}-\delta_{i,j})\log | (f_{i,j} - \epsilon_{i,j})(f_{i,j}-\delta_{i,j})|\ .$$

\noindent We start by bounding $\Phi_4$.  For any $(i,j)\in I_4$, 
\begin{eqnarray}|f_{i,j} - \epsilon_{i,j}| \leq 3|\epsilon_{i,j}|  \label{eq:esteps}, \ \ & & 
|f_{i,j} - \delta_{i,j}| \leq 3|\delta_{i,j}|  \label{eq:estdelta}\ .
\end{eqnarray}
Write $I_{4.1} \cup  I_{4.2}$, where $I_{4.1} = \{(i,j): 9|\epsilon_{i,j}\delta_{i,j}| \leq 1/e\}$ and $I_{4.2} = I_4\setminus I_{4.1}$.
Accordingly, for $r=1,2$:  
$\Phi_{4.r} := -\sum_{(i,j)\in I_{4.r}}  |(f_{i,j}-\epsilon_{i,j})(f_{i,j}-\delta_{i,j})|\log 
|(f_{i,j}-\epsilon_{i,j})(f_{i,j}-\delta_{i,j})|$.
Using (\ref{eq:esteps})-(\ref{eq:estdelta})  and the monotonicity (increasing) of $-x\log x$ in the range $x\in[0,1/e]$,
we conclude
\begin{eqnarray}
|\Phi_{4.1}| &\leq& -\sum_{(i,j)\in I_4}  9|\epsilon_{i,j}\delta_{i,j}|\log  9|\epsilon_{i,j}\delta_{i,j}| 
=  -\sum_{(i,j)\in I_4}  9\|\epsilon\|\|\delta\|\left |\frac{\epsilon_{i,j}\delta_{i,j}}{\|\epsilon\|\|\delta\|}\right |\log  9\|\epsilon\|\|\delta\|\left |\frac{\epsilon_{i,j}\delta_{i,j}}{\|\epsilon\|\|\delta\|}\right |  \nonumber \\
&=&  -\sum_{(i,j)\in I_4}  9\|\epsilon\|\|\delta\|\left |\frac{\epsilon_{i,j}\delta_{i,j}}{\|\epsilon\|\|\delta\|}\right |\log  9\|\epsilon\|\|\delta\| -\sum_{(i,j)\in I_4}  9\|\epsilon\|\|\delta\|\left |\frac{\epsilon_{i,j}\delta_{i,j}}{\|\epsilon\|\|\delta\|}\right |\log  \left |\frac{\epsilon_{i,j}\delta_{i,j}}{\|\epsilon\|\|\delta\|}\right |   \nonumber \\
&\leq& 9\alpha\beta \log 9\alpha\beta + 18 \alpha\beta \log n 
\leq 27 \alpha\beta \log n + 9\alpha\beta\log 9\ . \nonumber
\end{eqnarray}
where the second inequality used (\ref{eq:boundepsdelta}), Lemma~\ref{lem:entrbound} and Cauchy-Schwarz.
To bound $|\Phi_{4.2}|$, note that by Cauchy-Schwarz $\sum_{(i,j)\in I_{4.2}} |\epsilon_{i,j}\delta_{i,j}|\leq \alpha\beta$, and hence $|I_{4.2}| \leq 9e\alpha\beta \leq 27\alpha\beta$.  This implies that $|\Phi_{4.2}| \leq 27\alpha\beta$.  Combining, we conclude

\begin{equation}\label{boundbig}
|\Phi_4| \leq |\Phi_{4.1}| + |\Phi_{4.2}| \leq 27\alpha\beta\log n + 63\alpha\beta \leq (\alpha^2+\beta^2)(63+27\log n)\ .
\end{equation}

We now bound $|\Phi_3|$.  For all $(i,j)\in I_3$, (\ref{eq:esteps}) holds.  Again we need to consider two cases, by defining
$I_{3.1} := \{(i,j)\in I_3: 3|\epsilon_{i,j}| \leq 1/e\}$ and $I_{3.2} = I_3 \setminus I_{3.2}$ and, as above,
$\Phi_{3.1}$ and $\Phi_{3.2}$ in an obvious way.    Then,
\begin{eqnarray}
|\Phi_{3.1}| &\leq& -3\sum_{(i,j)\in I_{3.1}} |\epsilon_{i,j}||f_{i,j} - \delta_{i,j}|\log (3|\epsilon_{i,j}||f_{i,j}-\delta_{i,j}|)  \nonumber \\
&=&   -3\sum_{(i,j)\in I_{3.1}} |\epsilon_{i,j}||f_{i,j} - \delta_{i,j}|\log 3|\epsilon_{i,j}|  |f_{i,j}| -3\sum_{(i,j)\in I_{3.1}} |\epsilon_{i,j}||f_{i,j} - \delta_{i,j}|\log |1-\delta_{i,j}/f_{i,j}| \nonumber\\
&\leq &  3 \sum_{(i,j)\in I_{3.1}} |\epsilon_{i,j}||f_{i,j} - \delta_{i,j}| \log n  +  3\sum_{(i,j)\in I_{3.1}} |\epsilon_{i,j}||f_{i,j} - \delta_{i,j}| |\delta_{i,j}/f_{i,j}| \nonumber \\
&\leq& 3 \sum_{(i,j)\in I_{3.1}} |\epsilon_{i,j}|\frac 1{2\sqrt n} \log n + 3\sum_{(i,j)\in I_{3.1}} |\epsilon_{i,j}|/(2\sqrt {2n})\ . \label{eq:psi3.1}
\end{eqnarray}
By Cauchy-Schwarz,
 $\sum_{(i,j)\in I_{3.1}} |\epsilon_{i,j}| \leq \sqrt{|I_{3.1}|}\sqrt{\sum_{(i,j)\in I_{3.1}} \epsilon_{i,j}^2}  \leq \sqrt{|I_3|}\|\epsilon\| =  \sqrt{|I_3|}\alpha \ .$
 But by (\ref{eq:boundepsdelta}) and definition of $I_3$, we have $|I_3| \leq 2\alpha^2n$.  Combining with (\ref{eq:psi3.1}), we get
 \begin{equation}\label{eq:boundPsi3.1}
 |\Phi_{3.1}| \leq 3 \alpha^2 \log n + 3\alpha^2\ .
 \end{equation}

To bound $|\Phi_{3.2}|$, note that by (\ref{eq:boundepsdelta}) and by definition of $I_{3.2}$, $|I_{3.2}| \leq 9e^2\alpha^2 \leq 81\alpha^2$.  But clearly $|\Phi_{3.2}| \leq |I_{3.2}|$, hence  $|\Phi_{3.2}| \leq 81\alpha^2$.
Combining with (\ref{eq:boundPsi3.1}), we conclude
\begin{equation}\label{eq:boundPsi3}|\Phi_3| \leq 3\alpha^2\log n + 84\alpha^2\ .
\end{equation}
By symmetry, we also have:
\begin{equation}\label{eq:boundPsi2}|\Phi_2| \leq 3\beta^2\log n + 84\beta^2\ .
\end{equation}

\noindent
We now turn to approximate $\Phi_1$.
\begin{eqnarray*}
\Phi_1 &=& 
-\sum_{(i,j)\in I_1} f_{i,j}^2\log f_{i,j}^2 
-\underbrace{\sum_{(i,j)\in I_1} f_{i,j}^2\log ((1-\epsilon_{i,j}/f_{i,j})(1-\delta_{i,j}/f_{i,j}))}_{A} \\
& & +\underbrace{\sum_{(i,j)\in I_1} f_{i,j}(\epsilon_{i,j}+\delta_{i,j})\log ( (f_{i,j} - \epsilon_{i,j})(f_{i,j}-\delta_{i,j}))}_B 
 -\underbrace{\sum_{(i,j)\in I_1} \epsilon_{i,j}\delta_{i,j}\log ((f_{i,j} - \epsilon_{i,j})(f_{i,j}-\delta_{i,j})}_C\ .
\end{eqnarray*}
Subtracting $n\log n = -\sum_{i=1}^n\sum_{j=1}^n f_{i,j}^2\log f_{i,j}^2$ from both sides gives:
\begin{eqnarray*}
\Phi_1 - n\log &=& \underbrace{\sum_{(i,j)\not\in I_1} f_{i,j}^2\log f_{i,j}^2}_D -A+B-C\ .
\end{eqnarray*}
By (\ref{eq:boundepsdelta}), the number of pairs $(i,j)\in [n]\times [n]$ for which $\epsilon_{i,j}^2 \geq 1/(2n)$ is at most $2\alpha^2 n$.  Similarly, the number of pairs $(i,j)$ for which $\delta_{i,j}^2 \geq 1/(2n)$ is at most $2\beta^2 n$.
Hence, 
$|([n]\times[n])\setminus I_1|\leq 2(\alpha^2+\beta^2)n$.
  Therefore, $|D| \leq 2(\alpha^2+\beta^2)n(1/n)\log n = 2(\alpha^2+\beta^2)\log n$.
For $A$, we just need to notice that $A\leq 0$.
As for $B$,
\begin{eqnarray*}
B &=& \sum_{(i,j)\in I_1} f_{i,j}(\epsilon_{i,j}+\delta_{i,j})\log f_{i,j}^2 + \sum_{(i,j)\in I_1} f_{i,j}(\epsilon_{i,j}+\delta_{i,j})\log (1-\epsilon_{i,j}/f_{i,j})(1-\delta_{i,j}/f_{i,j}) \\
&=& -\sum_{(i,j)\in I_1} f_{i,j}(\epsilon_{i,j}+\delta_{i,j})\log n 
+ \sum_{(i,j)\in I_1} f_{i,j}(\epsilon_{i,j}+\delta_{i,j})\log (1-\epsilon_{i,j}/f_{i,j})(1-\delta_{i,j}/f_{i,j})
\end{eqnarray*}
Adding $\sum_{i=1}^n\sum_{j=1}^n f_{i,j}(\epsilon_{i,j}+\delta_{i,j})\log n = (\tr{\hat P}+\tr{\hat Q})\log n$ (see (\ref{eq:fromprojection})) to both sides,
taking absolute value on both sides, using the triangle inequality and the estimate $\log(1-\alpha)\leq |\alpha|$ for all $|\alpha|\leq 1/\sqrt 2$:
\begin{eqnarray*}
\left |B + (\tr{\hat P}+\tr{\hat Q})\log n\right | &\leq& \sum_{(i,j)\not\in I_1} |f_{i,j}|(|\epsilon_{i,j}|+|\delta_{i,j}|)\log n 
+\sum_{(i,j)\in I_1} (\epsilon_{i,j}^2 + \delta_{i,j}^2 +  2|\epsilon_{i,j}||\delta_{i,j}|) \\
&\leq&  \sum_{(i,j)\not\in I_1}\frac 1{\sqrt n} ((|\epsilon_{i,j}|+|\delta_{i,j}|)\log n )
+3(\alpha^2+\beta^2)  \\
&\leq& 2(\alpha^2+\beta^2)\log n + 3(\alpha^2+\beta^2)\ .
\end{eqnarray*}
Where the second inequality used Cauchy-Schwarz, and the third used Cauchy-Schwarz to obtain 
$\sum_{(i,j)\neq I_1}|\eps_{i,j}| \leq \sqrt{n^2-|I_1|}\alpha$ together with the estimate
$(n^2-|I_1|) \leq 2n\alpha^2$ (from the definition of $I_1$ and $\alpha$), and a similar
step for bounding $\sum_{(i,j)\neq I_1}|\delta_{i,j}|$.
To bound $|C|$, note that $|\log((f_{i,j}-\epsilon_{i,j})(f_{i,j}-\delta_{i,j}))| \leq 4\log n$ for all $(i,j)\in I_1$.  Hence, using Cauchy-Schwarz,
$ |C| \leq 4\sum_{(i,j)\in I_1} |\epsilon_{i,j}||\delta_{i,j}|\log n \leq 4\alpha\beta\log n \leq 4(\alpha^2+\beta^2)\log n$.
Combining our upper bound for $A$ and  estimates for $B, |C|$ and $|D|$, we conclude:
\begin{equation}\label{eq:boundPsi1}
\Phi_1 \geq n\log n- (\trace \hat P + \trace \hat Q)\log n - (\alpha^2+\beta^2)(6 +8\log n)\ .
 \end{equation}

\noindent
Finally, by combining (\ref{boundbig}), (\ref{eq:boundPsi3}), (\ref{eq:boundPsi2}), (\ref{eq:boundPsi1}), we conclude 
\begin{equation*}
\Phi(FP, FQ) \geq n\log n- (\trace \hat P + \trace \hat Q)\log n - (\alpha^2+\beta^2)(147 +30\log n)\ .
 \end{equation*}
This concludes the proof of (\ref{eq:lem:Fafterproj1}).  

\def\Phidiag{\Phi_{\operatorname{diag}}}
\def\Phioff{\Phi_{\operatorname{off}}}
We now prove (\ref{eq:lem:Fafterproj2}), whence we assume that $P,Q$ are PSD contractions (as are
$\hat P, \hat Q$). We decompose $\Phi_{P,Q}(\Id,\Id)$ as two sums, as follows:
\begin{eqnarray} 
\Phidiag &=& -\sum_{i=1}^n P(i,i)Q(i,i)\log|P(i,i)Q(i,i)| \nonumber \\
\Phioff &=& - \sum_{i\neq j} P(i,j)Q(i,j)\log|P(i,j)Q(i,j)|\ . \nonumber 
\end{eqnarray}
We start by bounding $\Phidiag$. Define $\eta$ as:
$$ \eta := \sum_{i=1}^n P(i,i)Q(i,i) = \sum_{i=1}^n(1-\hat P(i,i))(1-\hat Q(i,i)) \ .$$
Notice that $\eta \geq n - \tr \hat P - \tr \hat Q$, and that $0\leq P(i,i)Q(i,i)\leq 1$ for all $i\in [n]$ by the contraction property.  Using standard tools (e.g. Lagrange multipliers), it can be shown
that the function $-\sum y_i \log y_i$ under the constraints $0\leq y_i\leq 1, \sum y_i=\eta$ obtains
its maximum when $\forall i\in [n]: y_i=1/\eta$, at which case its value is $-\eta \log(\eta/n)$.
Hence,
\begin{eqnarray} \Phidiag \leq -\eta \log(\eta/n) &\leq& -(n-\trace \hat P - \trace \hat Q)\log(1 - \trace \hat P/n - \trace \hat Q/n) \nonumber \\
&\leq & -n\log(1 - \trace \hat P/n - \trace \hat Q/n) \nonumber \\
&\leq& n(\trace \hat P/n + \trace \hat Q/n + (\trace \hat P + \trace \hat Q)^2/n^2) \nonumber \\
&=& \trace \hat P  + \trace \hat Q + (\trace \hat P + \trace \hat Q)^2/n \nonumber \\
&\leq& \trace \hat P  + \trace \hat Q  + \|\hat P + \hat Q\|_F^2 \nonumber \\
&\leq& \trace \hat P + \trace \hat Q + \|\hat P\|_F^2 + \|\hat Q\|_F^2\ .
\end{eqnarray}
(we used twice the assumption that $n-\tr \hat P - \tr \hat Q \geq n/e$, for otherwise $\|\hat P\|_F^2+\|\hat Q\|_F^2 = \Omega(n)$ and (\ref{eq:lem:Fafterproj2}) is trivial.)

We now turn to bound $\Phioff$.  Let $\mu_P = \sum_{i\neq j} P(i,j)^2, \mu_Q = \sum_{i\neq j} Q(i,j)^2$.
If $\mu_P\mu_Q \leq 1$ then without loss of generality  $\mu_P\leq 1$, implying $\|\hat P\|_F = \Omega(n)$, and therefore (\ref{eq:lem:Fafterproj2}) is trivial.  Hence we assume $\mu_P\mu_Q\geq 1$.
 Using Lemma~\ref{lem:entrbound} (define $x$ to be the $n(n-1)$-dimensional vector with $x_{ij}=P(i,j)/\sqrt{\mu_P}$ for $i\neq j$, and similarly define $y$ using $Q$):
\begin{eqnarray*}
\Phioff &\leq & \sqrt{\mu_P \mu_Q} \log n(n-1) - \sum_{i\neq j}P(i,j)Q(i,j) \log \sqrt{\mu_P \mu_Q} \nonumber \\
&\leq& 2\sqrt{\mu_P \mu_Q} \log n \leq (\mu_P+\mu_Q)\log n\ ,
\end{eqnarray*}
where we used the AMGM inequality in the last step. 
But now notice that $\mu_P \leq \|\hat P\|_F^2, \mu_Q \leq \|\hat Q\|_F^2$.  Combining this with
our bound of $\Phidiag$ completes the proof.

\section{Further Discussion on Numerical Architecture and Overflow Definition }\label{sec:discussion}
\begin{enumerate}
\item
Our definition of overflow is counterintuitive, because we are used to thinking about overflow as an offending
machine state at a particular step of  the algorithm execution for a particular input, while our definition is 
stochastic.   The reason we use this definition is from the combination of (a) our desire to work with a spherically symmetric
input and (b) avoiding measuring complexity in the granularity of logical bit operations, stemming from the varying
word length typically arising even in the standard FFT benchmark.
It is possible to somewhat practically justify the stochastic
definition of overflow by thinking of running FFT on a large number $L$  of iid inputs, which can be thought of as being
stacked as columns of an input matrix $X$.  The larger $L$ is, the more concentrated the total number of bits required
to encode each \emph{row} of the matrix around $\Theta(L)$ will be.  For $L$ polynomial in $n$, the probability
of requiring more than $\Theta(L)$ bits per row becomes exponentially small at any step of standard FFT.
\item
We are  ignoring the fact that the machine state after $t$ steps on input $x$ in any computer is not only a result of quantizing the vector $M^{(t)} x$.  Rather, errors are accumulated from the effects of quantizing at earlier steps.  Taking accumulated errors into account should affect the numerical accuracy of both the benchmark and of any speed-up,  but 
quantifying this effect seems extremely difficult.    We take an information theoretical approach by saying that
the machine state after $t$ steps contains \emph{at most} the information in the quantization of the coordinates
of $M^{(t)} x$, and whatever information  lost  (due to this quantization)  cannot be later recovered   because the machine state
encodes everything that is known about the input (and output) at any given step.
\end{enumerate}


\section{Proof of Theorem~\ref{thm:subspaces}}\label{sec:proof:subspaces}
Let $V=\{v_1,\dots, v_k\}\subseteq \R^n$ be an orthonormal set satisfying the properties described in case (i)
of the theorem.
Similarly, let $U=\{u_1,\dots, u_{\ell}\}\subseteq \R^n$ be an orthonormal set satisfying the properties described in case (ii)
of the theorem. 
We show that as long as $k+\ell$ is at most $O(n)$, then we can extend one of the two sets by one element.

Let $P$ denote the projection onto $(\span V)^\perp$ and $Q$ the projection onto $(\span U)^\perp$.
Using Lemma~\ref{lem:Fafterproj}, we have that
\begin{equation} \Phi_{P, Q}(F) - \Phi_{P,Q}(\Id) \geq  n\log n - (\trace \hat P + \trace \hat Q)(1+\log n) -C(\|\hat P\|_F^2+\|\hat Q\|_F^2)\log n\ , \nonumber
\end{equation}
for some global $C>0$, where $\hat P = \Id - P, \hat Q = \Id-Q$ (the orthogonal projections).  By known properties of 
projection matrices, $\trace \hat P = \|\hat P\|_F^2 = k, \trace \hat Q = \|\hat Q\|_F^2 = \ell$.
Therefore,
\begin{equation} \Phi_{P, Q}(F) - \Phi_{P,Q}(\Id) \geq  n\log n - C'(k+\ell)\log n , \nonumber
\end{equation}
for some global $C'>0$.
This implies, using Theorem~\ref{thm:main} (with $R=1$) that for some $t\in\{0,\dots, m\}$ and $i\in [n]$:
\begin{eqnarray}
\|\row{M^{(t)}P}{i}\|\cdot\|\row{(M^{(t)})^{-T}Q}{i}\| &\geq& \frac{  n\log n - C'(k+\ell)\log n}{m}  \nonumber \\
&=& \frac{ (n\log n - C'(k+\ell)\log n)b }{n\log n}\ . \nonumber
\end{eqnarray}

\noindent Hence, as long as $k+\ell \leq n/(2C')$:
\begin{eqnarray}
\|\row{M^{(t)}P}{i}\|\cdot\|\row{(M^{(t)})^{-T}Q}{i}\| &\geq& b/2\ .
\end{eqnarray}

This implies that either $\|\row{M^{(t)}P_\perp}{i}\|\geq \sqrt{b/2}$ or that $\|\row{(M^{(t)})^{-T}Q_\perp}{i}\|\geq \sqrt{b/2}$.  In the former case we can extend the set $V$ by adding $v_{k+1} = \row{M^{(t)}P}{i}/\|\row{M^{(t)}P}{i}\|$, which is orthogonal to $v_1,\dots,v_k$ by construction.  In the latter case we can
extend the set $U$ by adding $u_{k+1}=\row{(M^{(t)})^{-T}Q}{i}/\|\row{(M^{(t)})^{-T}Q}{i}\|$
which is again, orthogonal to $u_1,\dots, u_\ell$ by construction. 

This process of augmenting $V$ and $U$ can continue until $k+\ell \geq n/2C'$, which implies that either $k\geq n/4C'$ 
(establishing extreme overflow of the theorem) or $\ell\geq n/4C'$ (establishing extreme underflow).

Assume $k\geq n/4C'$ and let $n'=k$. 
For $j\in [n']$ let $t_j$ denote the time step of the overflow corresponding to direction $v_j \in \R^n$ and let
$i_j$ denote the coordinate at which the overflow occurs.  It is clear from the construction that $i_j$ is one of the
at most two coordinates affected by the $t_j$'th step.  We show that there exist no $1\leq j < j'\leq n'$ such that $(t_j,i_j)=(t_{ j'}, i_{ j'})$.  Indeed, note that $\row{M^{(t_j)}P_{ j'}}{i_j}$ must be null  by
construction, contradicting the fact that $\|\row{M^{(t_{j})}P_{ j'}}{i_j}\|=\|\row{M^{(t_{ j'})}P_{\hat j'}}{i_{ j'}}\| =\Omega(\sqrt b)$.  The conclusion is that for any $t\in [m]$ there can
be at most two indices $j,j'\in[n']$ such that $t_j=t_{j'}=t$, and therefore the cardinality of the
set $\{t_1,\dots, t_{n'}\}$ is at least $n/2$.
  A similar argument is done for the extreme underflow case, concluding the proof.

\newpage
\section{An Illustrative Figure}
\begin{figure}[h!tbp]
\begin{center}
\psfrag{3}{\scalebox{0.7}{$g_1$}}
\psfrag{2}{\scalebox{0.7}{$e$}}
\psfrag{1}{\scalebox{0.7}{$\geq \eps\sqrt{b/2}$}}
\scalebox{1.5}{
\psfragfig[width=1.0in,keepaspectratio]{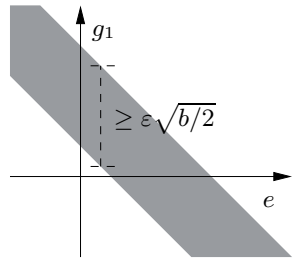}
}
\end{center}
\caption{The uncertainty interval in $g_1$ as a function of $e$, given the representation of $M^{(t)}x(i_t)$ in a computer word.  The random variable $e$ contains only information from components of the input that are orthogonal  to $\xunder$.  
\label{fig:fig}}
\end{figure}

\end{document}